\titlespacing*{\section}{0pt}{12pt plus 4pt minus 2pt}{2pt plus 2pt minus 2pt}
\titlespacing*{\subsection}{0pt}{12pt plus 4pt minus 2pt}{2pt plus 2pt minus 2pt}
\titlespacing*\subsubsection{0pt}{12pt plus 4pt minus 2pt}{2pt plus 2pt minus 2pt}
\titlespacing*\paragraph{0pt}{12pt plus 4pt minus 2pt}{2pt plus 2pt minus 2pt}
    \renewcommand*{\thesection}{\arabic{section}}
    \renewcommand*{\thesubsection}{\thesection.\Alph{subsection}}
    \renewcommand*{\p@subsection}{}
    \renewcommand*{\thesubsubsection}{\thesubsection.\arabic{subsubsection}}
    \renewcommand*{\p@subsubsection}{}
\newtheorem{theorem}{Theorem}[section]
\newtheorem{lemma}{Lemma}[section]
\theoremstyle{definition}
\newtheorem{definition}{Definition}[section]
\newtheorem{remark}{Remark}[section]
\newtheorem{example}{Example}[section]
\newtheorem{assumption}{Assumption}[section]
\DeclareMathOperator{\divergence}{div}
\newcommand{\dm}{\ \mathrm{d}}
\newcommand{\bfe}{{\mathbold e}}
\newcommand{\bfn}{{\mathbold n}}
\newcommand{\Scr}[1]{\mathbb{#1}}
\newcommand{\scr}[1]{\mathcal{#1}}
\newcommand{\ve}[1]{\mathbold{#1}}
\newcommand{\TO}[1]{\hat{\textbf{#1}}}
\newcommand{\ket}[1]{\lvert #1\rangle}
\newcommand{\bra}[1]{\langle #1\rvert}
\newcommand{\bracket}[3]{\langle #1\vert #2\vert #3\rangle}
\newcommand{\beol}[1]{\begin{equation}
#1
\end{equation}}
\newcommand{\bml}[1]{\begin{equation}
\begin{split}
#1
\end{split}
\end{equation}}
\newcommand{\bmlnn}[1]{\begin{equation*}
\begin{split}
#1
\end{split}
\end{equation*}}
\def\lr{\mbox{\begin{picture}(7,10)
\put(1,0){\line(1,0){5}}
\put(1,0){\line(1,2){5}}
\put(6,0){\line(0,1){10}}
\end{picture}
}}
\begin{document}
	
\preprint{To appear in Journal of the Mechanics and Physics of Solids (\url{https://doi.org/10.1016/j.jmps.2024.105743}).}

\title{Nonuniqueness in Defining the Polarization: Nonlocal Surface Charges and the \\ Electrostatic, Energetic, and Transport Perspectives}

\author{Shoham Sen}
    \email{shoham.sen16@gmail.com}
    \affiliation{Department of Civil and Environmental Engineering, Carnegie Mellon University}

\author{Yang Wang}
    \affiliation{Pittsburgh Supercomputing Center}

\author{Timothy Breitzman}
    \affiliation{Air Force Research Laboratory}
    
\author{Kaushik Dayal}
    \affiliation{Department of Civil and Environmental Engineering, Carnegie Mellon University}
    \affiliation{Center for Nonlinear Analysis, Department of Mathematical Sciences, Carnegie Mellon University}
    \affiliation{Department of Mechanical Engineering, Carnegie Mellon University}

\date{\today}

\begin{abstract}
    Ionic crystals, ranging from dielectrics to solid electrolytes to complex oxides, play a central role in the development of modern technologies for energy storage, sensing, actuation, and other functional applications. 
    Mesoscale descriptions of these crystals are based on the continuum polarization density field to represent the effective physics of charge distribution at the scale of the atomic lattice. 
    However, a long-standing difficulty is that the classical electrostatic definition of the macroscopic polarization --- as the dipole or first moment of the charge density in a unit cell --- is not unique; rather, it is sensitive to translations of the unit cell in an infinite periodic system. 
    This unphysical non-uniqueness has been shown to arise from starting directly with an infinite system --- wherein the boundaries are ill-defined --- rather than starting with a finite body and taking appropriate limits.
    This limit process shows that the electrostatic description requires not only the bulk polarization density, but also the surface charge density, as the effective macroscopic descriptors; that is, a nonlocal effective description.
    Other approaches to resolve this difficulty include the popular modern theory of polarization that completely sets aside the polarization as a fundamental quantity in favor of the change in polarization from an arbitrary reference value and then relates the change in polarization to the transport of charge (the ``transport'' definition); or, in the spirit of classical continuum mechanics, to define the polarization as the energy-conjugate to the electric field (the ``energetic'' definition).

    This work examines the relation between the classical electrostatic definition of polarization, and the transport and energy-conjugate definitions of polarization.
    We show the following:
    (1) The transport of charge does not correspond to the change in polarization in general; instead, one requires additional simplifying assumptions on the electrostatic definition of polarization for these approaches to give rise to the same macroscopic electric fields. Thus, the electrostatic definition encompasses the transport definition as a special case.
    (2) The energy-conjugate definition has both bulk and surface contributions; while traditional approaches neglect the surface contribution, we find that accounting for the nonlocal surface contributions is essential to be consistent with the classical definition and obtain the correct macroscopic electric fields.
\end{abstract}
\maketitle

\section*{Highlights}
\begin{itemize}
    \item Presents surface charge density as an accompaniment to bulk polarization density to resolve the non-uniqueness in the classical electrostatic definition of polarization.
    \item Connects the electrostatic definition of polarization to the Modern Theory of Polarization's transport-based definition of polarization. 
    \item Shows a corresponding surface free energy that goes with the bulk free energy to connect the thermodynamic/energetic definition of polarization with the electrostatic definition of polarization.
    \item Presents a zero-temperature rigorous proof of how polarization can be derived from quantum mechanics using homogenization.
\end{itemize}


\section{Introduction}

Dielectric materials, ranging from solid electrolytes to complex oxides to dielectric elastomers, are central to modern technologies for energy storage, sensing, actuation, and other functional applications.
Electrostatic (Coulombic) interactions have a central role in the structure and response of such materials.
These interactions arise at the atomic scale in the interaction between electrons and protons; consequently, it is important to begin at the atomic or quantum scales when modeling these materials.
On the other hand, it is also important for models to be applicable at much larger macroscopic scales.
Hence, an important area of current activity is multiscale analyses that bridge these length scales, e.g. [\cite{grasinger2020statistical, grasinger2021architected, grasinger2021flexoelectricity, grasinger2022statistical,torbati2022coupling, liu2018emergent, chen2021interplay, ahmadpoor2013apparent, liu2013flexoelectricity, james-muller, sharp1994electrostatic, xiao2005influence,steigmann2018mechanics}]. 

A central quantity of interest is the polarization vector --- i.e., the dipole density per unit volume --- that serves as the key multiscale mediator of the electrostatic interactions.
The polarization provides a macroscopic quantity that captures the key aspects of the molecular-scale electrostatic interactions.
However, an important fundamental issue in the atomic and quantum scale modeling of these materials is that, in a periodic crystal, the polarization depends on the choice of the unit cell.
That is, the polarization computed using two choices of unit cell that are related by a simple translation --- which is physically irrelevant in a periodic system --- can differ significantly, including by a complete flip in the direction.

The dependence of the polarization on the choice of unit cell has been recognized for several decades.
A widely used resolution to this issue is through the ``Modern Theory of Polarization'' [\cite{resta2007theory}], which treats the change in polarization as a fundamental quantity and relates this to the transport of charge.
Another approach, rooted in thermodynamics [\cite{toupin1956elastic}], defines the polarization as the energy-conjugate to the electric field.
The aim of this paper is to elucidate the connections between these seemingly disparate perspectives.

\paragraph*{Prior Work.}

The polarization vector field is a key element of numerous successful macroscopic continuum models of dielectric and functional materials: e.g.,  piezoelectricity [\cite{haeni2004room,tagantsev1986piezoelectricity,tagantsev2010domains,rahmati2019nonlinear,deng2017continuum}], flexoelectricity [\cite{zubko2013flexoelectric,abdollahi2015revisiting,abdollahi2014computational,abdollahi2019converse,abdollahi2015fracture,liu2013flexoelectricity,krichen2016flexoelectricity}], ferroelectricity [\cite{Irene1,yang2011completely,yang2011effect,yang2012free,yang2012influence,yang2012microstructure,yen2008study,shu2008constrained}], dielectric elastomers [\cite{khandagale2023statistical, khandagale2024statistical,grasinger2020statistical,grasinger2021architected, grasinger2021flexoelectricity, grasinger2022statistical,friedberg2023electroelasticity,itskov2019electroelasticity}], and analogous models in micromagnetism [\cite{james1990frustration,james1998magnetostriction,james2000martensitic,desimone2002constrained,tickle1999ferromagnetic,benesova2018existence,wang2001gauss,garcia2003accurate,bosse2007microdomain}].
Starting from the atomic scale, several multiscale approaches have used the polarization as a means of coarse-graining, e.g. [\cite{james1994internal,jha2023discrete,jha2023atomic,tadmor-EffHamil,grasinger2021flexoelectricity,grasinger2021architected,grasinger2020statistical,marshall2014atomistic,cicalese2009discrete,bach2020discrete,alicandro2008continuum}].

An important area of current research is in embedding, i.e., using expensive but accurate atomic or quantum models in limited regions where they are required, and embedding these regions in a coarse-grained continuum description  [\cite{QC-review1,QC-review2,tadmor2011modeling,kulkarni2008variational,knap-ortiz,qc-original, miller2002quasicontinuum, kochmann2016quasicontinuum,makridakis2014atomistic}].
While these have largely focused on models without electrostatics, recent works have extended these to account for electrostatics, and the polarization plays a central role in the coarse-graining [\cite{jha2023atomic,marshall2014atomistic,garcia2007efficient,garcia2009linear}]. 

Turning specifically to prior work on finding a suitable definition of polarization, there has been significant contributions from the quantum mechanics community. Classically, the formal definition of polarization is the dipole moment per unit volume; we will refer to this as the classical/electrostatic definition of polarization denoted by $\ve{p}_{cl}$; see Figure \ref{Fig.DomHom} and subsequent discussion. 
However, for infinite periodic systems, $\ve{p}_{cl}$ depends on the choice of unit cell, making it non-unique. 
The non-uniqueness in the classical definition gave rise to the Modern Theory of Polarization pioneered by Resta and Vanderbilt [\cite{resta2007theory}], building on contributions by [\cite{ortiz1994macroscopic,martin1974comment,resta1993macroscopic,king1993theory,littlewood1980calculation}]. 
They posit that since experiments measure the change in polarization rather than the polarization itself, the change in polarization is the quantum mechanical observable. 
Further, since a change in polarization causes a flow of current, we can measure the change in polarization by measuring the current flow. 
Assuming adiabatic changes, they relate the current to the Berry phase of the corresponding quantum mechanical wavefunction. 
The resulting polarization will be labeled $\ve{p}_{T}$. 
Lastly, there is the energetic definition of polarization [\cite{toupin1956elastic}]: $\ve{p}_{W}:=\dfrac{\partial W}{\partial \ve{E}}$ is the energy-conjugate to the electric field $\ve{E}$ where $W$ is the appropriate free energy density.
This last definition depends on the model chosen to describe the system through the choice of $W$.

Landauer [\cite{landauer1981pyroelectricity}] discusses the possibility of a surface-level description, in addition to the bulk polarization, for piezo- and pyro-electrics.
Mathematical approaches to related problems typically begin with a finite body and take the limit; such approaches can naturally account for surface effects.
For instance, [\cite{james1994internal,muller2002discrete,schlomerkemper2009discrete,jha2023discrete}] take such an approach; however, they begin with microscopic dipoles rather than a charge distribution and hence do not have surface effects.
Related work in [\cite{rosakis2012continuum}] with short-range atomic forces gives rise to a surface energy.
In [\cite{marshall2014atomistic,jha2023atomic}], they formally study such limits using distributions of charges and find that the coarse-grained description requires both a polarization --- defined as the dipole moment of the charge over any given periodic unit cell --- and the corresponding surface charge density for the same given unit cell.

\paragraph*{Contributions of This Work.}

This work relates the three definitions of polarization, and shows that the ``classical'' definition encompasses the ``transport'' and ``energetic'' definitions of polarization.
By examining the continuum limit of when the lattice spacing is much smaller than the characteristic dimensions of the body, we show that accounting for the boundaries consistently provides a route to uniquely compute electric fields and potentials despite the non-uniqueness of the polarization. Specifically, different choices of the unit cell in the interior of the body leads to correspondingly different partial unit cells at the boundary; while the interior unit cells satisfy charge neutrality, the partial cells on the boundary need not.
The net effect is that the bulk and surface contributions are both dependent on the choice of unit cell, but balance each other to give the same electric fields for any choice of unit cell.

First, we consider the relation between the classical and the transport definitions.
The transport definition computes the change in polarization by integrating the current over time. 
In Theorem \ref{Thm4.1}, we show that the current calculated in the transport definition and the electrostatic definition of polarization give rise to the same electrostatic potential for the system, provided that the change in the system is slow with respect to the driving. 
As mentioned in Remark \ref{ReMark4.3}, many ferroelectric phase transformations have a discontinuous change in polarization, making it hard to justify the adiabatic change hypothesis. 
Examples \ref{Ex-4.1} and \ref{ex2} show how the transport definition does not match the classical electrostatic definition of polarization in general, but does so only under certain strict assumptions.
Though the transport definition provides an unambiguous quantity that is independent of the choice of unit cell, it does not give rise to the correct electrostatic fields except under these assumptions.

Second, we consider the relation between the classical and the energetic definitions.
We show that the free energy density definition has two parts, a surface and a bulk part, analogous to the surface and the bulk description in the electronic definition of polarization.
While traditional approaches neglect the nonlocal surface contribution, we show that it is essential to account for both these parts to be consistent with the electrostatic definition and give rise to the correct electrostatic fields.

It is worth highlighting an important advantage of being able to use the classical definition of polarization, rather than the Berry phase approach typically used in the transport definition of polarization.
The classical definition requires only the charge distribution, rather than the transport definition which requires the quantum mechanical wavefunction in its typical formulation in terms of the Berry phase.
The charge distribution is an outcome of most standard atomic or quantum models, whereas the (true) wavefunction is typically not obtainable even in common electronic structure methods such as density functional theory (DFT).

\paragraph*{Organization.}

In Section \ref{cl}, we treat the classical definition of polarization. We start with a simple example explaining the non-uniqueness plaguing $\ve{p}_{cl}$, followed by describing our real space approach to polarization; we address the non-uniqueness problem here. We formulate the problem mathematically and follow this with the homogenized result. While we do not go into the details of deriving the result, we explain the homogenization process with simple and illustrative 1-d examples. We then present the polarization problem for an infinite crystal as a limit of a finite crystal.

This is followed by the transport definition of polarization in Section \ref{Berry}. We start by presenting the motivation and then simple calculations showing that the classical and transport definitions are not generally equivalent. 
Since the transport definition of polarization is posed for infinite crystals, we comment on approaches taken to present this for finite crystalline domains, referencing opposing views by other researchers. The main result of the section connects $\ve{p}_{cl}$ to $\ve{p}_{T}$.

In Section \ref{W}, we analyze the free energy density definition of polarization; differentiating the free energy with respect to the electric field. 
In Section \ref{VariationalDerivative}, we find the functional/variational derivative of the free energy with respect to the external potential, obtaining a relationship to connect $\ve{p}_{cl}$ to $\ve{p}_{W}$. 
We then consider a finite crystalline domain and homogenize the equation above, properly scaled, by following a procedure similar to that used in [\cite{PolarizationTheorem}].
The details of the calculation can be found in Section \ref{Polarization from Schrodinger's equations}, but they are separated from Section \ref{VariationalDerivative} as they are not essential for the physical interpretation. 
This result helps establish the connection between the electrostatic and energetic definitions of polarization. 
Both definitions of polarization have an associated surface charge distribution that is required for uniqueness.

\subsection{Notation.}
\label{Notation}
We use standard notations from analysis. $L^p$, with $1\leq p<\infty$, denotes the space of functions whose $p$-{th} power is integrable. $\scr{C}^k$, with $k\geq0$ denotes the space of functions with continuous $k$ derivatives. $W^{1,p}$ denote the Sobolev space of functions where the function and its weak first derivative are $p$ integrable. The space $\scr{C}_c$ denotes a continuous function with compact support.

We often use the notation $f\vert_{\Omega}$ to denote the values of the function restricted to the subset $\Omega$. The measure for volume integration is $\dm V_{\ve{x}}$ while that for surface integration is $\dm S_{\ve{x}}$, where the subscript describes the variable. We will use the shorthand $f(x,\cdot)$ for a function $f(x,y)$ to denote that $x$ is fixed and that the variable replaced with $\cdot$ is allowed to vary. We will use $\Delta$ to denote the Laplacian operator.

We define the Minkowski sum between two sets $P$ and $Q$ as 
\begin{equation}
P\oplus Q:=\{p+q\vert p\in P,~q\in Q\}
\end{equation}
where for our purposes, we require that $P,Q\subset\Scr{R}^d$ and we use $x$ to denote the singleton set $\{x\}$, provided no confusion arises.

We define a lattice $\scr{L}$ for this problem,
\begin{equation}
\scr{L}:=\big\{\ve{x}\in\Scr{R}^3\big\vert \ve{x}=\sum_{i}x^i\ve{e}_i,~x^i\in\Scr{Z},~i=1,2,3\big\}
\end{equation}
where $\{\ve{e}_i\}_1^3$ are linearly independent lattice vectors describing the lattice. The corresponding dual lattice ($\scr{RL}$) and dual vectors $\{\ve{e}^j\}_1^3$ are defined by the orthogonality relation $\ve{e}_i\cdot\ve{e}^j=\delta^j_{i}$. We scale $\scr{L}$ by a factor $l\in(0,1]$ and obtain the scaled lattice $l\scr{L}$.

When homogenizing, we want to do so over some domain $\Omega\subset\Scr{R}^3$. For Section \ref{cl}, this happens to be the support of the charge distribution, ie, the charge distribution is zero outside $\Omega$. The following definitions are with respect to $\Omega$.

We also use the following definitions related to the lattice:
\begin{enumerate}
    \item $\Box$ -- denotes a choice of unit cell.
    \item $l\Box$ -- a scaled unit cell.
    \item $\lr$ -- a partial unit cell.
    \item $l\lr$ -- a scaled partial unit cell. A partial unit cell is a unit cell through which the boundary passes.
    \item $\hat{\ve{x}}_l$ -- is the corner map and denotes a specific point with respect to a unit cell. This way, a specific scaled unit cell (partial unit cell) can be referred to as $\hat{\ve{x}}_l\oplus l\Box$ ($\hat{\ve{x}}_l\oplus l\lr$). For simplicity, one may think of the corner map varying over the lattice points, though that restriction is unnecessary. Appendix \ref{Admissible set of unit Cells} presents a rigorous definition for the above formally defined quantities.
    \item $\Omega(\Box)$ -- denotes the collection of scaled unit cells completely contained inside $\Omega$.
    \item $\Omega(\lr):=\Omega\backslash\Omega(\Box)$ -- denotes the collection of partial unit cells.
\end{enumerate}
Refer to Figure \ref{Fig.DomHom} for a visual representation.

Note that if $\ve{x}\in\Omega$, then $\hat{\ve{x}}_l$ may lie outside $\Omega$. Thus, we will use $\hat{\ve{x}}_l\in\Omega(\Box)$ and $\hat{\ve{x}}_l\in\Omega(\lr)$ as short hand to imply $\hat{x}_l\oplus l\Box\subset\Omega$ and $\hat{x}_l\oplus l\Box\cap\Omega^c\neq\{\}$ respectively. 
We use $\Omega^c$ to describe the complement/absence of the domain. 
The characteristic function is denoted as $\mathbbm{1}_{\Omega}$.

Defining ``bulk charge neutrality'' for a given charge distribution in a domain will be convenient. The rigorous definition is available in  Appendix \ref{Bulk Charge Neutrality}. This definition ensures that the system can only be described with dipoles in the bulk. To get a clearer understanding, consider the weaker assumption of ``charge neutral unit cells''. This means that the total charge in each unit cell is 0; these are unit cells contained entirely in the domain $\Omega$. Thus, the net electrical information in each unit cell is described using dipoles. Since we want to describe the bulk using polarizations, we use the general condition of ``bulk charge neutrality''.  Throughout this paper, we have set $\epsilon_0=1$; this quantity measures the electrostatic interaction strength. When taking the continuum limits, we want to fix the interaction strength between two atoms.

\section{Classical Definition of Polarization}\label{cl}

We briefly summarize key results about the classical definition of polarization from [\cite{PolarizationTheorem}]; the rigorous proofs can be found there. 
The definition of polarization used here is the dipole moment per unit volume, calculated as the average dipole moment per unit cell. 
The origin of the non-uniqueness is the fact that the dipole moment depends on the choice of unit cell. 

In Example \ref{Ex-2.1}, we provide an illustration demonstrating the non-uniqueness in polarization. Subsequently, to elucidate the polarization theorem and its application to deriving the homogenized PDE, we present Examples \ref{Ex-2.2} and \ref{Ex-2.3}. In Example \ref{Ex-2.2}, we examine a standard 1-d electrostatics problem with specified Dirichlet boundary conditions. In Example \ref{Ex-2.3}, we examine a problem with mixed boundary conditions. We elaborate on these two examples to show the process of calculating the polarization and surface charge, ultimately demonstrating the uniqueness of the electrostatic potential.

\begin{example}\label{Ex-2.1}([\cite{resta2007theory,marshall2014atomistic}])
    Consider the charge distribution $\rho_l\colon\Scr{R}^3\to\Scr{R}$ given by $\rho_l(\ve{x})=\dfrac{1}{l}\sin\left(2\pi\dfrac{x_1}{l}\right)$; the charge distribution is $l$-periodic. Now we evaluate the dipole with respect to an arbitrary unit cell at $\ve{a}\oplus l[0,1]^3$:
    \beol{
        \ve{p}(\ve{a})=\dfrac{1}{l^3}\int_{\ve{a}\oplus l[0,1]^3}\dfrac{(\ve{x}-\ve{a})}{l}\rho_l(\ve{x})\dm V_{\ve{x}}=\int_{[0,1]}\ve{y}\sin\left(2\pi\dfrac{a_1}{l}+2\pi y_1\right)\dm y_1=-\dfrac{1}{2\pi}\cos{\left(2\pi\dfrac{a_1}{l}\right)}\bfe_1.
    }
    using $\ve{y} = \frac{\ve{x} - \ve{a}}{l}$.
    
    Thus, $\ve{p}$ depends on the choice of unit cell, i.e., the choice of $\ve{a}$, for a given charge distribution.
\end{example}

In [\cite{PolarizationTheorem}], we show that even though the polarization is not a unique quantity, it is accompanied by a surface charge density $\sigma$, and the polarization and $\sigma$ together maintain the uniqueness of the electrostatic potential. 
The definition of polarization is the dipole moment per unit cell of the lattice, while the surface charge $\sigma$ is the total charge in the partial unit cell.

We start with a sequence of electrostatic Poisson equations posed on $\Scr{R}^3$ with respect to a sequence of ``bulk charge neutral'' charge distributions $\dfrac{\tilde{\rho}_l}{l}\colon\Omega\subset\Scr{R}^3\to\Scr{R}$ with $||\tilde{\rho}_l(\hat{\ve{x}}+l\cdot)||_{L^2(\Box)}<\infty$.
The corresponding electric potential field is denoted by $\Phi_l(\ve{x})$.
\begin{equation}
\begin{split}
-\Delta &\Phi_l(\ve{x})=\dfrac{\tilde{\rho}_l(\ve{x})}{l} \text{ in } \Scr{R}^3\backslash\Gamma_d,\\
	 &\Phi_l=f \text{ on } \Gamma_d,\\
	 &\Phi_l=0 \text{ as } |\ve{x}|\to\infty,
\end{split}
\label{Prob}
\end{equation}
$\Gamma_d\subset\partial\Omega$ is the portion of the boundary with Dirichlet boundary conditions specified, corresponding to electrodes.

If we now take the limit as $l\to0$ for the problem in \eqref{Prob}, then we obtain the following homogenized problem:
\begin{equation}
\begin{split}
    \Delta_{\ve{x}}&\Phi_0(\ve{x})=\mathbbm{1}_{\Omega}\divergence(\ve{p}_0) \text{ in } \Scr{R}^3\backslash\Gamma_d,\\
	 &\Phi_0=f \text{ on } \Gamma_d,\\
	 &\left\llbracket\dfrac{\partial\Phi_0}{\partial \bfn}\right\rrbracket=\ve{p}_0\cdot \ve{n}+\sigma_p \text{ on } \partial\Omega\backslash\Gamma_d,
\end{split}
\label{Result}
\end{equation}
where in the above, the subscript $\ve{x}$ on the Laplacian denotes that it acts on $\ve{x}$ and not on the microscopic variable $\ve{y}$.

Our calculations in \textbf{[}\cite{PolarizationTheorem}\textbf{]} are motivated by an earlier calculation by \textbf{[}\cite{marshall2014atomistic}\textbf{]}. We present a formal version of the calculation in \textbf{[}\cite{marshall2014atomistic}\textbf{]} to provide a glimpse for interested readers. Consider \eqref{Prob} with $|\Gamma_d|=0$. For this case, we can write the solution as:
\beol{\Phi_l(\ve{x})=\int_{\Omega} G(\ve{x},\ve{x}')\dfrac{\rho(\ve{x}')}{l}\dm V_{\ve{x}'},\label{formal}}
where $G(\ve{x},\ve{x}')$ is the free space Greens function. We break up the domain $\Omega$ into full and partial unit cells. Thus the integration can be represented by:
\beol{\Phi_l(\ve{x})=\sum_{\Omega(\Box)}\int_{l\Box} G(\ve{x},\ve{x}')\dfrac{\rho(\ve{x}')}{l}\dm V_{\ve{x}'}+\sum_{\Omega(\lr)}\int_{l\lr} G(\ve{x},\ve{x}')\dfrac{\rho(\ve{x}')}{l}\dm V_{\ve{x}'},}
Now we substitute, $\ve{x}'=\ve{\hat{x}}'+l\ve{y}'$, and use a Taylor expansion of the Greens function to get:
\bml{\Phi_l(\ve{x})&=\sum_{\ve{\hat{x}}'\in\Omega(\Box)}l^3\int_{\Box} G(\ve{x},\ve{\hat{x}}'+l\ve{y}')\dfrac{\rho(\ve{\hat{x}}'+l\ve{y}')}{l}\dm V_{\ve{y}'}+\sum_{\ve{\hat{x}}'\in\Omega(\lr)}l^3\int_{\lr} G(\ve{x},\ve{\hat{x}}'+l\ve{y}')\dfrac{\rho(\ve{\hat{x}}'+l\ve{y}')}{l}\dm V_{\ve{y}'}\\
    &\approx\sum_{\ve{\hat{x}}'\in\Omega(\Box)}l^3\int_{\Box} \Big[G(\ve{x},\ve{\hat{x}}')+l\ve{y}'\cdot\partial_{\ve{x}'}G(\ve{x},\ve{x}')\Big]\dfrac{\rho(\ve{\hat{x}}'+l\ve{y}')}{l}\dm V_{\ve{y}'}+\sum_{\ve{\hat{x}}'\in\Omega(\lr)}l^2\int_{\lr} G(\ve{x},\ve{\hat{x}}')\rho(\ve{\hat{x}}'+l\ve{y}')\dm V_{\ve{y}'},\\
    &=\sum_{\ve{\hat{x}}'\in\Omega(\Box)}l^3 \partial_{\ve{x}'}G(\ve{x},\ve{x}')\cdot\int_{\Box} \ve{y}'\rho(\ve{\hat{x}}'+l\ve{y}')\dm V_{\ve{y}'}+\sum_{\ve{\hat{x}}'\in\Omega(\lr)}l^2G(\ve{x},\ve{\hat{x}}')\int_{\lr} \rho(\ve{\hat{x}}'+l\ve{y}')\dm V_{\ve{y}'},\\
    &=\sum_{\ve{\hat{x}}'\in\Omega(\Box)}l^3 \partial_{\ve{x}'}G(\ve{x},\ve{x}')\cdot\ve{p}(\hat{\ve{x}}')+\sum_{\ve{\hat{x}}'\in\Omega(\lr)}l^2G(\ve{x},\ve{\hat{x}}')\sigma(\ve{\hat{x}}'),}
where, to get the penultimate line, we have used that each unit cell is charge neutral.
Now for $l$ small enough, we can replace the sum with an integration. This gives us,
\bml{\Phi_l(\ve{x})&=\int_{\Omega} \partial_{\ve{x}'}G(\ve{x},\ve{x}')\cdot\ve{p}({\ve{x}}')\dm V_{\ve{x}'}+\int_{\partial\Omega}G(\ve{x},\ve{x}')\sigma(\ve{x}')\dm S_{{\ve{x}'}}\\
                 &=\int_{\Omega} G(\ve{x},\ve{x}')\big(-\text{div}_{\ve{x}'}\ve{p}(\ve{x}')\big)\dm V_{\ve{x}'}+\int_{\partial\Omega}G(\ve{x},\ve{x}')\big(\ve{p}\cdot \ve{n}+\sigma(\ve{x}')\big)\dm S_{{\ve{x}'}}.}
Comparing the above with \eqref{formal} gives us the corresponding result for \eqref{Result}.

The derivation is available in [\cite{PolarizationTheorem}].  We write down the polarization theorem below,
\beol{
    \lim_{l\to 0}\int_{\Omega}\varphi(\ve{x})\dfrac{\tilde{\rho}_l(\ve{x})}{l}\dm V_{\ve{x}}=\int_{\Omega}\nabla_{\ve{x}}\varphi(\ve{x})\cdot\ve{p}_0(\ve{x})\dm V_{\ve{x}}+\int_{\partial\Omega}\sigma(\ve{x})\varphi(\ve{x})\dm S_{\ve{x}}\label{polthm}~~\forall\varphi\in L^2(\Omega)\cap L^{\infty}(\partial\Omega),
}
where $\tilde{\rho}_l$ satisfies the hypothesis above. Here, $\ve{p}_0$ denotes the polarization, and $\sigma$ denotes the surface charge; the expressions are given below.

The fact that the potential is unique follows from the condition that $\Phi_l$ converges weakly to $\Phi_0$ ($\Phi_l\rightharpoonup\Phi_0$). Moreover, since $\Phi_0$ is unique, we obtain that $\divergence(\ve{p}_{0})$ and $\sigma_p+\ve{n}\cdot\ve{p}_{0}$ are unique. Note that $\divergence\ve{p}_0=-\rho_b$ is the bound bulk charge density $\sigma_p$ is the partial charge density and $\ve{n}\cdot\ve{p}_{0}=\sigma_b$ is the bound surface charge density. 

We summarize the homogenization process below. 
Following Figure \ref{Fig.DomHom}, given a domain $\Omega\subset\Scr{R}^3$, we insert a lattice $\scr{L}$ scaled with a parameter $l\in(0,1]$. 
Now consider a sequence of ``bulk charge neutral'' charge distributions $\tilde{\rho}_l/l\colon\Omega\to\Scr{R}$ with bulk charge neutrality defined with respect to the lattice $l\scr{L}$. 
We decompose the domain $\Omega$ into $\Omega(\Box)\cup\Omega(\lr)$, the former composed of all the complete unit cells while the latter is composed of all the partial unit cells. 
The complete unit cells are highlighted in green, and the partial unit cells are highlighted in blue. In each of these complete unit cells, we replace the charge with the corresponding average dipole in the unit cell:
$$
    \ve{p}_l(\ve{x})=\dfrac{1}{|l\Box|}\int_{\hat{\ve{x}}\oplus l\Box}\dfrac{(\ve{z}-\hat{\ve{x}})}{l}\tilde{\rho}_l(\ve{z})\dm V_{\ve{z}}\mathbbm{1}_{\hat{\ve{x}}\oplus l\Box}(\ve{x}),
$$
where $\hat{\ve{x}}$ has been defined in Section \ref{Notation}.
In each of the partial unit cells, we calculate the total charge in each partial unit cell: 
$$\tilde{\sigma}_{p,l}=\int_{\hat{\ve{x}}\oplus l\lr}\dfrac{\tilde{\rho}_l(\ve{x})}{l}\dm V_{\ve{x}},$$
and the scaled area of the partial unit cell:
$$K_l=\dfrac{|\hat{\ve{x}}\oplus l\Box\cap\partial\Omega|}{l^{2}}=\dfrac{|\hat{\ve{x}}\oplus l\lr\cap\partial\Omega|}{l^{2}}.$$

\begin{figure}[htb!]
  \centering
  \includegraphics[width=\textwidth]{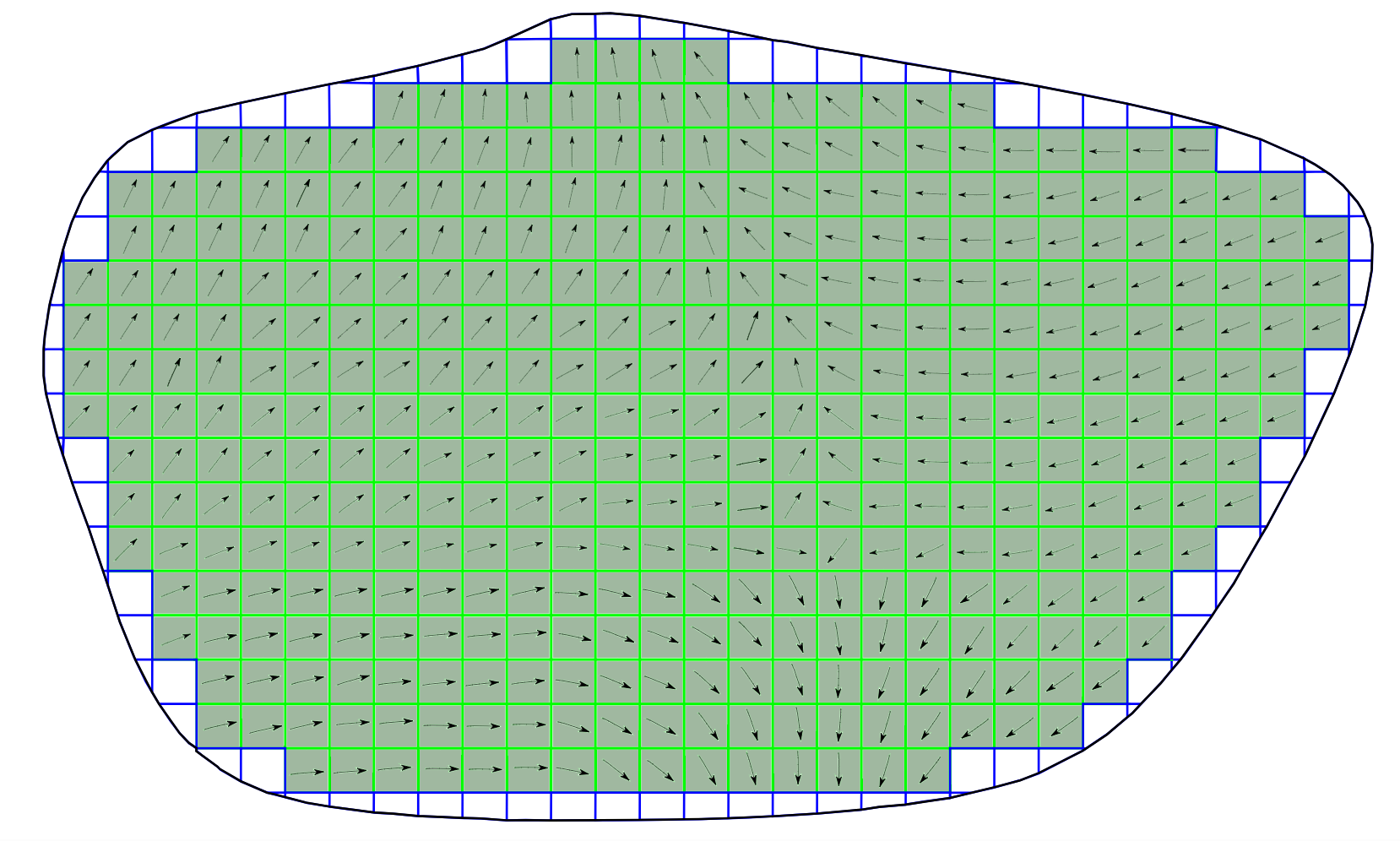}
  \caption{A domain $\Omega$ with ``bulk charge neutral'' locally periodic charge distribution in each unit cell. The green outline denotes a full charge-neutral unit cell, the blue outline denotes a partial unit cell, and the black outline marks $\partial\Omega$. The collection of full unit cells (in green) is $\Omega(\Box)$ while the collection of partial unit cells (in blue) is $\Omega(\lr)$. The arrows in the figure denote the dipoles in each unit cell. The homogenization process begins with inserting a (scaled) lattice, identifying the unit cells and partial unit cells, and replacing the charge in each unit cell with the average dipole moment. The charge in a partial unit cell gets mapped to the corresponding part of the boundary shared by that unit cell; the partial unit cells are left charge-less.}
  \label{Fig.DomHom}
\end{figure}

The surface charge density, associated with the corresponding corner of the partial unit cell, is defined on $\partial\Omega$ by:
$$
    \sigma_{p,l}(\hat{\ve{x}})=\dfrac{\tilde{\sigma}_{p,l}(\hat{\ve{x}})}{K_l(\hat{\ve{x}})}.
$$
Thus the homogenized problem is posed on $\Omega$ with zero charge distribution in $\Omega(\lr)$; the charge in the partial unit cells has been accounted for as a surface charge and distributed onto $\partial\Omega$.
In the above equation, the limit of $\sigma_{p,l} \to \sigma_p$ as $l\to 0$. Note that for $l$ small enough, the continuity of the potential implies $\Phi\vert_{\partial\Omega}\approx\Phi\vert_{\partial\Omega(\Box)}$.

To put this result into perspective, we discuss a few simple 1-d examples to explain the homogenization and limiting process.

\begin{figure}[htb!]
    \begin{center}
    \includegraphics[scale=0.74]{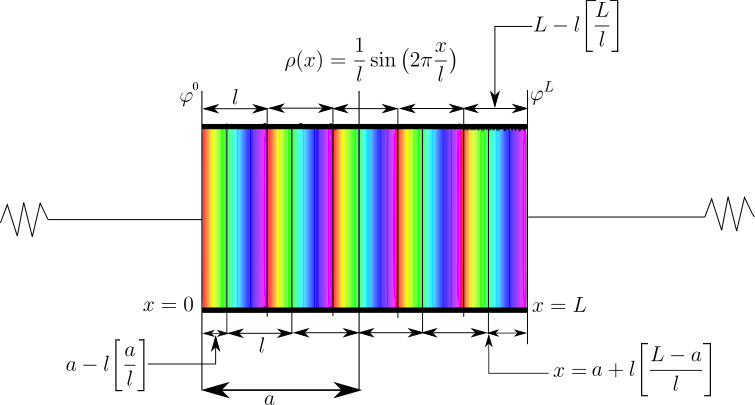}
    \caption{Image of a sinusoidal charge distribution with color used to denote the value of the charge. The markers on the two ends denote the $x$ values ($x=0,x=L$) and the potential at the two ends. Two choices of the lattice have been marked. The first choice of the origin of the lattice is at $x=0$; this has no partial unit cell on the left with a $[l\Big[\frac{L}{l}\Big],L]$ unit cell on the right. The next choice of the origin for the lattice is at $x=a$; this has $[0,a-l\Big[\frac{a}{l}\Big]]$ as the partial unit cell on the left and a $[a+l\Big[\frac{L-a}{l}\Big],L]$ partial unit cell on the right. We mark the two choices above and below the domain to avoid confusion.}
    \end{center}
    \label{Fig.PolProc}
\end{figure}

\begin{example}\label{Ex-2.2}
    Consider Fig. \ref{Fig.PolProc}, which shows a charge distribution $\rho\colon\Scr{R}\to\Scr{R}$ with the expression $\rho(x)=\frac{1}{l}\sin{\Big(2\pi\frac{x}{l}\Big)}$ supported on a domain $[0,L]$. 
    The electric potential $\varphi_l(x)$ is the solution of:
    \bml{
        &\varphi_l''(x)=-\rho(x)\mathbbm{1}_{[0,L]}(x) \text{ in } \Scr{R}\\
         &\varphi_l(0)=\varphi^0\\
         &\varphi_l(L)=\varphi^L,
    }
    The solution is:
    \beol{
        \varphi_l(x)=
        \begin{cases} 
                  0 & x< 0 \\
                  \varphi^0\left(1-\dfrac{x}{L}\right)+\varphi^L\left(\dfrac{x}{L}\right)-\dfrac{l}{4\pi^2}\dfrac{x}{L}\sin{\Big(2\pi\dfrac{L}{l}\Big)}+\dfrac{l}{4\pi^2}\sin{\Big(2\pi\dfrac{x}{l}\Big)} & x\in[0,L] \\
                  0 & x>L 
        \end{cases},
    \label{eqn:soln-ex1}
    }
           
    We will now consider the homogenized problem. 
    The choice of the unit cell is not unique. 
    We consider two choices: the unit cell starting at $x = 0$ and at a general location $x = a$; for both of these cases, the corner map is chosen to be the left edge of the 1-d unit cell.
    
    First, we set up the lattice $l\scr{L}$ about $x=0$. Thus, there is no partial unit cell on the left, while the partial unit cell on the right is given by $[l\left[\dfrac{L}{l}\right], L]$.
    For this choice, the average polarization in each unit cell and the resulting bound charge density are given by:
    \begin{align}
        & \ve{p}(R)=\dfrac{1}{l}\int_{[R,R+l]}(x-R)\rho(x)\dm x=-\dfrac{1}{2\pi}\cos{\Big(2\pi\dfrac{R}{l}\Big)}=-\dfrac{1}{2\pi}
        \\
        & \rho_{b}(R)=-\divergence_R \ve{p}(R)=0
    \end{align}
    where we have used the fact that the lattice has been set up so that $R\in l\Scr{Z}$.
    
    The approximate homogenized equation\footnote{
    Note that $\varphi^0$ corresponds to the potential on the left boundary and is written as a superscript to differentiate it from the homogenized potential $\varphi_0(x)$.
    } on $[0,L]$ can be written as
    \bml{\varphi_{0}''(R)&=0 \text{ in } \Scr{R}^3\\
         \varphi_0(0)&=\varphi^0\\
         \varphi_0(L)&=\varphi^L.
    }
    The solution is given by:
    \beol{
    \varphi_0(R)=
    \begin{cases} 
              0 & x< 0 \\
              \varphi^0\left(1-\dfrac{x}{L}\right)+\varphi^L\left(\dfrac{x}{L}\right) & x\in[0,L] \\
              0 & x>L
           \end{cases}.
    }
    It is easy to see that $\varphi_0$ above is the point-wise limit of $\varphi_l$ from \eqref{eqn:soln-ex1}, giving the desired weak convergence result.
    
    We next set up the lattice $l\scr{L}$ at $x=a$. 
    Thus there is a partial unit cell $[0,a-l\left[\frac{a}{l}\right]]$ on the left, and $[a+l\left[\frac{L-a}{l}\right],L]$ on the right.
    For this choice, the average polarization in each unit cell and the resulting bound charge density are given by:
    \begin{align}
        & \ve{p}(R)=\dfrac{1}{l}\int_{[R,R+l]}(x-R)\rho(x)\dm x=-\dfrac{1}{2\pi}\cos{\Big(2\pi\dfrac{R}{l}\Big)}=-\dfrac{1}{2\pi}\cos{\Big(2\pi\dfrac{a}{l}\Big)},\label{3.10}
        \\
        & \rho_{b}(R)=-\divergence_R \ve{p}(R)=0
    \end{align}
    The approximate homogenized equation on $[0,L]$ is 
    \bml{
        \varphi_0''(R)&=0 \text{ in } \Scr{R}\\
         \varphi_0(0)&=\varphi^0\\
         \varphi_0(L)&=\varphi^L,
    }
    which is identical to the result from the lattice centered at $x=0$.
\end{example}

We next consider mixed boundary conditions, which show the roles of $\Omega(\Box)$, $\Omega(\lr)$, and $\partial\Omega$. 
We see that the charge distribution has a contribution to the potential by augmenting the flux term on the boundary.
The details of the theorem are available in [\cite{PolarizationTheorem}].
Given a domain $\Omega\subset\Scr{R}^3$ with boundary decomposed as $\partial\Omega=\Gamma_d\cup\Gamma_n$, and given a sequence of ``bulk charge neutral'' charge distribution $\tilde{\rho}_l/l\colon\Omega\to\Scr{R}$ with $||\tilde{\rho}_l||_{L^2(\Box)}<\infty$. 
Then the sequence of electrostatic Poisson problems:
\begin{equation}
\begin{split}
-\Delta &\Phi_l(\ve{x})=\dfrac{\tilde{\rho}_l(\ve{x})}{l} \text{ in } \Omega,\\
	 &\Phi_l=f \text{ on } \Gamma_d,\\
	 &\dfrac{\partial \Phi_l}{\partial \bfn}=g \text{ on } \Gamma_n,\label{Prob1}
\end{split}
\end{equation}
has a well-defined limit as we let $l\to0$ in \eqref{Prob1}. 
We obtain the following homogenized problem:
\bml{
    \Delta_{\ve{x}}&\Phi_0(\ve{x})=\divergence\ve{p}_0 \text{ in } \Omega,\\
	 &\Phi_0=f \text{ on } \Gamma_d,\\
	 &\dfrac{\partial\Phi_0}{\partial \bfn}=g+\ve{p}_0\cdot \ve{n}+\sigma_p \text{ on } \partial\Omega\backslash\Gamma_d\label{Result1}.
}

\begin{example}\label{Ex-2.3}[Mixed Boundary Conditions]
    Consider the charge distribution $\rho\colon\Scr{R}\to\Scr{R}$ with the expression $\rho(x)=\frac{1}{l}\sin{\Big(2\pi\frac{x}{l}\Big)}$ compactly supported on a domain $[0,L]$ (Fig. \ref{Fig.PolProc}). 
    The electric potential $\varphi_l(x)$ is the solution of:
    \bml{
        &\varphi_l''(x)=-\rho(x)~\text{in}~[0,L],\\
         &\varphi_l(0)=f, \\
         &\varphi'_l(L)=g.
    }
    and has the expression:
    \beol{
        \varphi_l(x)=f+gx-\dfrac{x}{2\pi}\cos{\left(2\pi\dfrac{L}{l}\right)}+\dfrac{l}{4\pi^2}\sin{\left(2\pi\dfrac{x}{l}\right)}.
    }
    
    We will now consider the homogenized problem. 
    We present two choices for the unit cell, $x = 0$ and a general $x = a$; for both of these cases, the corner map is chosen to be the left edge of the 1D unit cell.
    
    First, we set up the lattice $l\scr{L}$ about $x=0$. Thus there is no partial unit cell on the left, while the partial unit cell on the right is given by $[l\left[\dfrac{L}{l}\right],L]$.
    The average polarization in each unit cell, the bound charge, the charges due to partial unit cells, and the bound surface charge are given respectively by:
    \begin{align}
        & \ve{p}(R)=\dfrac{1}{l}\int_{[R,R+l]}(x-R)\rho(x)\dm x=-\dfrac{1}{2\pi}\cos{\left(2\pi\dfrac{R}{l}\right)}=-\dfrac{1}{2\pi},
        \\
        & \rho_b(R)=-\divergence\ve{p}(R)=0
        \\
        & \sigma_p(0)=0,~~\sigma_p(L)=\int_{[l\big[\frac{L}{l}\big],L]}\rho(x)\dm x=\dfrac{1}{2\pi}\left[1-\cos{\left(2\pi\dfrac{L}{l}\right)}\right]
        \\
        & \sigma_b(0)=\ve{p}\cdot\ve{n}=-\ve{p}(0)=\dfrac{1}{2\pi},~~\sigma_b\left(l\bigg[\dfrac{L}{l}\bigg]\right)=\ve{p}\left(l\bigg[\dfrac{L}{l}\bigg]\right)=-\dfrac{1}{2\pi}
    \end{align}
    where we have used the fact that the lattice has been set up so that $R\in l\Scr{Z}$.
    
    The approximate homogenized equation on $[0,l\left[\frac{L}{l}\right]]$ can be written as
    \bml{
        \varphi_0''(R)&=0~\text{in}~\Scr{R}\\
         \varphi_0(0)&=f\\
         \left\llbracket\dfrac{\partial\varphi_0}{\partial n}\right\rrbracket\left(l\left[\frac{L}{l}\right]\right)&=\left\llbracket\varphi'_0\left(l\left[\frac{L}{l}\right]\right)\right\rrbracket=\sigma_b\left(l\left[\frac{L}{l}\right]\right)=-\dfrac{1}{2\pi}\\
         \dfrac{\partial\varphi_0}{\partial n}(L)&=\varphi'_0(L)=g+\sigma_p(L)=g+\dfrac{1}{2\pi}-\dfrac{1}{2\pi}\cos{\Big(2\pi\dfrac{L}{l}\Big)}.
    }
         
    We denote by $\varphi_{0,d}$ the potential inside $\Omega(\Box):=[0,l\left[\dfrac{L}{l}\right]]$, and by $\varphi_{0,g}$ the potential in the gap $\Omega(\lr):=[l\left[\dfrac{L}{l}\right],L]$. 
    The continuity conditions $\varphi_{0,d}=\varphi_{0,g}$ and $(\varphi'_{0,d}-\varphi'_{0,g})(l\left[\dfrac{L}{l}\right])=\sigma_b$ are imposed at $l\left[\dfrac{L}{l}\right]$. 
    The solution is:
    \begin{equation}
        \varphi_0(x)
        = 
        \begin{cases}
            \varphi_{0,d}(x)=gx-\dfrac{x}{2\pi}\cos\left(2\pi\dfrac{L}{l}\right)+f, & x \in \Omega(\Box)\\
            \varphi_{0,g}(x)=gx+\dfrac{x}{2\pi}\left[1-\cos\left(2\pi\dfrac{L}{l}\right)\right]-\dfrac{1}{2\pi}l\left[\dfrac{L}{l}\right]+f, & x \in \Omega(\lr)
        \end{cases}
    \end{equation}
    It is clear that for $l$ small, $\varphi_l\approx\varphi_0$. 
    In the limit, $\varphi_{0,g}$ has vanishing support and $\varphi_{0,d}'(l\left[\dfrac{L}{l}\right])\to g+\sigma_p+\sigma_b$.
    
    We next consider the lattice $l\scr{L}$ at $x=a$. 
    Thus there is a partial unit cell $[0,a-l\left[\frac{a}{l}\right]]$ on the left, and $[a+l\left[\frac{L-a}{l}\right],L]$ on the right.
    The average polarization in each unit cell, the bound charge, the charges due to partial unit cells, and the bound surface charge are given respectively by:
    \begin{align}
        & 
        \ve{p}(R)=\dfrac{1}{l}\int_{[R,R+l]}(x-R)\rho(x)\dm x=-\dfrac{1}{2\pi}\cos{\Big(2\pi\dfrac{R}{l}\Big)}=-\dfrac{1}{2\pi}\cos{\Big(2\pi\dfrac{a}{l}\Big)},\label{3.23}
        \\
        & 
        \rho_b(R)=-\divergence_R\ve{p}(R)=0
        \\
        &
        \sigma_p(0)=\int_{[0,a-l\big[\frac{a}{l}\big]]}\rho(x)\dm x=\dfrac{1}{2\pi}\Big[1-\cos{\Big(2\pi\dfrac{a}{l}\Big)}\Big],\\
        &
        \sigma_p(L)=\int_{[a+l\big[\frac{L-a}{l}\big],L]}\rho(x)\dm x=\dfrac{1}{2\pi}\Big[\cos{\Big(2\pi\dfrac{a}{l}\Big)}-\cos{\Big(2\pi\dfrac{L}{l}\Big)}\Big]
        \\
        &
        \sigma_b(a-l\left[\frac{a}{l}\right])=\ve{p}\cdot\ve{n}=-\ve{p}(a-l\left[\frac{a}{l}\right])=\dfrac{1}{2\pi}\cos{\Big(2\pi\dfrac{a}{l}\Big)},\\
        &
        \sigma_b(l\left[\frac{L-a}{l}\right]+a)=\ve{p}(l\left[\dfrac{L-a}{l}\right]+a)=-\dfrac{1}{2\pi}\cos{\Big(2\pi\dfrac{a}{l}\Big)}
    \end{align}
    where we have used that the lattice satisfies $R\in a+l\Scr{Z}$.

    The homogenized problem on $[0,L]$ can be written as:
    \bml{\varphi_0''(R)&=0~\text{in}~\Scr{R}\\
         \varphi_0(0)&=f\\
    \left\llbracket\dfrac{\partial\varphi_0}{\partial \bfn}\right\rrbracket\left(a-l\bigg[\frac{a}{l}\bigg]\right)&=\sigma_b\left(a-l\bigg[\frac{a}{l}\bigg]\right)\\
    \left\llbracket\dfrac{\partial\varphi_0}{\partial \bfn}\right\rrbracket\left(a+l\bigg[\frac{L-a}{l}\bigg]\right)&=\sigma_b\left(a+l\bigg[\frac{L-a}{l}\bigg]\right)\\
    \dfrac{\partial\varphi_0}{\partial \bfn}(L)&=g+\sigma_p(L)}
    We denote by $\varphi_{0,d}$, $\varphi^L_{0,g}$, and $\varphi^R_{0,g}$ respectively the potential inside $\Omega(\Box):=[a-l\bigg[\frac{a}{l}\bigg],a+l\bigg[\dfrac{L-a}{l}\bigg]]$, in the gap to the left $[0,a-l\bigg[\frac{a}{l}\bigg]]$, and in the gap to the right $[a+l\bigg[\dfrac{L-a}{l}\bigg],L]$.
    We require continuity of the potential ($\varphi_{0,d}=\varphi_{0,g}$) at $a+l\bigg[\dfrac{L-a}{l}\bigg]$ and $a-l\bigg[\dfrac{a}{l}\bigg]$.
    Further, the jump in the field must satisfy $(\varphi'_{0,d}-\varphi^{R\prime}_{0,g})(a+l\bigg[\dfrac{L-a}{l}\bigg])=\sigma_b$ and $(\varphi^{L\prime}_{0,g}-\varphi'_{0,d})(a-l\bigg[\dfrac{a}{l}\bigg])=\sigma_b$ respectively.
    The solution is then given by:
    \begin{equation}
        \varphi_0(x)
        = 
        \begin{cases}
            \varphi^L_{0,g}(x)=gx+\dfrac{x}{2\pi}\bigg[\cos\Big(2\pi\dfrac{a}{l}\Big)-\cos\Big(2\pi\dfrac{L}{l}\Big)\bigg]+f
            &
            x \in [0,a-l\bigg[\frac{a}{l}\bigg]]
            \\
            \varphi_{0,d}(x)=gx-\dfrac{x}{2\pi}\cos\Big(2\pi\dfrac{L}{l}\Big)+\Big(a-l\bigg[\dfrac{a}{l}\bigg]\Big)\dfrac{1}{2\pi}\cos\Big(2\pi\dfrac{a}{l}\Big)+f
            &
            x \in \Omega(\Box)
            \\
            \varphi^R_{0,g}(x)=gx+\dfrac{x}{2\pi}\bigg[\cos\Big(2\pi\dfrac{a}{l}\Big)-\cos\Big(2\pi\dfrac{L}{l}\Big)\bigg]+f-\big(l\bigg[\dfrac{a}{l}\bigg]+l\bigg[\dfrac{L-a}{l}\bigg]\big)\dfrac{1}{2\pi}\cos\Big(2\pi\dfrac{a}{l}\Big)
            &
            x \in [a+l\bigg[\dfrac{L-a}{l}\bigg],L]
        \end{cases}
    \end{equation}
    It is clear that for $l$ small enough, $\varphi_l\approx\varphi_0$.
\end{example}

The standard approach to calculating polarization is to deal directly with infinite systems, unlike the limit analysis for finite systems presented above. 
The problem with directly using an infinite system is that there is no boundary, making $\sigma$ undefined. 
This is a critical problem as the surface charge is the quantity corresponding to the polarization that helps preserve the uniqueness of the electric field.

\subsection{Polarization in Infinite Crystal}
\label{PolInfCrys}

We next show that the polarization distribution for an infinite crystal is well-defined. 
Since the transport definition of polarization is defined for infinite crystals, we need to put the two definitions on an equal footing to compare them. 

Given a ``bulk charge neutral'' charge distribution $\tilde{\rho}_l/l\colon\Scr{R}^3\to\Scr{R}$, we consider a sequence of bounded domains $\{\Omega_h\}_{\scr{J}}\subset\Scr{R}^3$ where $\scr{J}$ is some indexing set such that the set increases to fill $\Scr{R}^3$ ($\Omega_h\uparrow\Scr{R}^3$). We require that $||\tilde{\rho}_l||_{L^2(\Box)}<\infty$ with $\tilde{\rho}_l(\ve{x})\to0$ as $|\ve{x}|\to\infty$. The last assumption follows from the requirement that the source term cannot be non-zero at infinity for the potential to be well-defined.

We consider the problem:
\begin{equation}
\begin{split}
    -\Delta &\Phi_{l,h}(\ve{x})=\dfrac{\tilde{\rho}_l(\ve{x})}{l}\mathbbm{1}_{\Omega_h} \text{ in } \Scr{R}^3\backslash\Gamma_{d,h},\\
    & \Phi_{l,h}=f \text{ on } \Gamma_{d,h},\\
    &\Phi_{l,h}=0 \text{ as } |\ve{x}|\to\infty,
\end{split}\label{3.29}
\end{equation}
where $\Gamma_{d,h}\subset\partial\Omega_h$ has Dirichlet boundary conditions specified, and the potential has decay conditions at infinity. 
The homogenized limit of \eqref{3.29} is
\bml{
    \Delta_{\ve{x}}&\Phi_{0,h}(\ve{x})=\mathbbm{1}_{\Omega_h}\divergence\ve{p}_{0,h}~~\text{in}~\Scr{R}^3,\\
	 &\Phi_{0,h}=f~\text{on}~\Gamma_{d,h},\\
	 &\Bigg\llbracket\dfrac{\partial\Phi_{0,h}}{\partial \bfn}\Bigg\rrbracket=\ve{p}_{0,h}\cdot \ve{n}_h+\sigma_{p,h}~\text{on}~\partial\Omega_h\backslash\Gamma_{d,h}.\label{3.30}
}

In \eqref{3.30}, for each $\Omega_h$ bounded, even though $\ve{p}_{0,h}$ is non-unique, $\divergence\ve{p}_{0,h}$ is unique. 
We now let $\Omega_h\uparrow\Scr{R}^3$ and obtain that, for the infinite system, $\divergence\ve{p}_{cl}=\lim_{h\to\infty}\mathbbm{1}_{\Omega_h}\divergence\ve{p}_{0,h}$. Moreover, due to the decay of $\tilde{\rho}_l$, we have $\ve{p}_{0,h}\cdot\ve{n}_h+\sigma_{p,h}\overset{h\uparrow\infty}{\longrightarrow}0$ point-wise. Thus $\Phi_{cl}=\lim_{h\to\infty}\Phi_{0,h}$, is unique.

There is a debate as to whether polarization is a bulk or surface effect. 
One of the main ideas that led to this debate was that for shorted piezoelectric crystals, the change in polarization appears as a surface charge. 
Though some researchers believe that polarization is a bulk effect and not a surface effect, J.W.F. Woo and R. Landauer have claimed that a complete description of polarization must include surface effects [\cite{landauer1981pyroelectricity}]. 
Our derivation reveals that a complete description of the system requires both $\ve{p}_{cl}$ and $\sigma_p$, where the former is the classical definition of the polarization and the latter is the surface charge density previously defined. 
The uniqueness of the potential for finite systems requires both quantities, implying that polarization is not entirely a bulk quantity but requires nonlocal surface-level descriptions.

\section{Transport Definition of Polarization}
\label{Berry}

The Modern Theory of Polarization by Resta and Vanderbilt [\cite{resta2007theory}] originates in the fact that for crystals, a change in polarization produces a measurable current. 
The current, unlike the polarization, is uniquely defined, thereby providing a better definition for polarization. 
The authors highlight that, to date, experimental approaches to measure the polarization have in fact measured the change in polarization, thereby leading them to the hypothesis that the change in polarization is an observable. 
If the process inducing the change is further assumed to be adiabatic, this allows for a perturbative approach to calculating the change in the wavefunction and, thereby, the current. 
This leads to an expression for the current in terms of the Berry phase of the time-dependent wavefunction. 
The Berry phase of the wavefunction gives us the electronic contribution to the polarization; the nuclear contribution is calculated classically; and the net polarization is the sum. 

As stated, the two definitions of polarization $\ve{p}_{cl}$ and $\ve{p}_{T}$ seem disconnected. 
One of the definitions is inherently periodic (at least locally) while the other is not; this was the origin of the non-uniqueness as mentioned in Section \ref{cl}. 
Given a charge distribution $\rho$, the classical definition of polarization is given by
\beol{
    \ve{p}_{cl}(\ve{R},t)=\dfrac{1}{|\Omega_{uc}|}\int_{\ve{R}\oplus\Omega_{uc}}(\ve{x}-\ve{R})\rho(\ve{x},t)\dm V_{\ve{x}},
}
where $\Omega_{uc}$ is the unit cell over which the average is computed.

The corresponding transport definition\footnote{
    As is usual in the adiabatic approximation, we use $t$ to denote the independent variable; $t$ could correspond to time, but it could also be thought of as independently-varied parameter.
} is given by
\beol{
    \Delta\ve{p}_{T}=\ve{J}\Delta t,
}
where
\beol{
    \ve{J}(\ve{R},t)=\dfrac{1}{|\Omega_{uc}|}\int_{\ve{R}\oplus\Omega_{uc}}\ve{j}(\ve{x},t)\dm V_{\ve{x}}
}
is the adiabatically-induced macroscopic current density expressed as a unit cell ($\ve{R}\oplus\Omega_{uc}$) average of the microscopic current density, and $\ve{j}$ is the microscopic current density. 
The important point is that the average current density is unique and does not depend on the cell boundary.

Using the conservation of charge equation, $\partial_t\rho+\divergence\ve{j}=0$, we can show that:
\bml{
    \Delta\ve{p}_{cl}(\ve{R})
    &=\int_{t_i}^{t_f}\int_{\ve{R}\oplus\Omega_{uc}}\dfrac{(\ve{x}-\ve{R})}{|\Omega_{uc}|}\partial_t\rho(\ve{x},t)\dm V_{\ve{x}}\dm t
    =-\int_{t_i}^{t_f}\int_{\ve{R}\oplus\Omega_{uc}}\dfrac{(\ve{x}-\ve{R})}{|\Omega_{uc}|} \divergence(\ve{j}) \dm V_{\ve{x}}\dm t
    \\
    &=-\int_{t_i}^{t_f}\int_{\ve{R}\oplus\Omega_{uc}}\dfrac{[\divergence\big((\ve{x}-\ve{R})\otimes\ve{j}\big)-\ve{j}]}{|\Omega_{uc}|}\dm V_{\ve{x}}\dm t
    \\
    &=\int_{t_i}^{t_f}\ve{J}(\ve{R},t)\dm t-\int_{t_i}^{t_f}\dfrac{1}{|\Omega_{uc}|}\int_{\ve{R}\oplus\partial\Omega_{uc}}(\ve{x}-\ve{R})\ve{n}\cdot\ve{j}\dm S_{\ve{x}}\dm t.
}
We can rewrite the equation above to highlight the difference between the definitions of polarization:
\beol{
    \Delta\ve{p}_{T}(\ve{R})
    =
    \Delta\ve{p}_{cl}(\ve{R})+\int_{t_i}^{t_f}\dfrac{1}{|\Omega_{uc}|}\int_{\ve{R}\oplus\partial\Omega_{uc}}(\ve{x}-\ve{R})\ve{j}\cdot\ve{n} \dm S_{\ve{x}}\dm t
    \label{relation},
}
$\Delta\ve{p}_{T}$ is independent of the choice of unit cell, while both terms on the right are not. 

It is also clear that unless the second term on the right in \eqref{relation} is identically $0$, the two definitions of polarization are not equal. 
We therefore work on rewriting this term next.
While \eqref{relation} is valid for arbitrary shapes of unit cells, we start by choosing unit cells of the form:
\beol{
    \Omega_{uc}:=\{\ve{x}\in\Omega\cap\Scr{R}^3\vert \ve{x}=\nu^i\ve{e}_i, \ve{\nu}\in[0,1)^3 \text{ with basis vectors } \{\ve{e}_i\}_1^3\}.
}
Substituting this into the second term of \eqref{relation}, we get
\beol{
    \dfrac{1}{|\Omega_{uc}|}\int_{\ve{R}\oplus\partial\Omega_{uc}}(\ve{x}-\ve{R})\ve{j}(\ve{x})\cdot\ve{n} \dm S_{\ve{x}}=I_1+I_2+I_3,
}
where $I_k$ is the integral over surfaces with normal co-linear with $\ve{e}^k$. We can expand the lattice vector $\ve{R}=R^i\ve{e}_i$ to simplify the expression for $I_1$:
\bml{
    I_1=&\dfrac{1}{[\ve{e}_1,\ve{e}_2,\ve{e}_3]}\int_{\nu^2=0}^1\int_{\nu^3=0}^1(1.\ve{e}_1+\nu^2\ve{e}_2+\nu^3\ve{e}_3)\ve{j}(R^1+1,\nu^2,\nu^3)\cdot\dfrac{(\ve{e}_2\times\ve{e}_3)}{|\ve{e}_2\times\ve{e}_3|}|\ve{e}_2\times\ve{e}_3|\dm \nu^2\dm \nu^3
    \\
    & \quad -\dfrac{1}{[\ve{e}_1,\ve{e}_2,\ve{e}_3]}\int_{\nu^2=0}^1\int_{\nu^3=0}^1(0.\ve{e}_1+\nu^2\ve{e}_2+\nu^3\ve{e}_3)\ve{j}(R^1,\nu^2,\nu^3)\cdot\dfrac{(\ve{e}_2\times\ve{e}_3)}{|\ve{e}_2\times\ve{e}_3|}|\ve{e}_2\times\ve{e}_3|\dm \nu^2\dm \nu^3
    \\
    = & \dfrac{\ve{e}_1}{[\ve{e}_1,\ve{e}_2,\ve{e}_3]}\int_{\nu^2=0}^1\int_{\nu^3=0}^1\ve{j}(R^1,\nu^2,\nu^3)\cdot(\ve{e}_2\times\ve{e}_3)\dm \nu^2\dm \nu^3
    \\
    = &(\ve{e}_1\otimes\ve{e}^1)\int_{\nu^2=0}^1\int_{\nu^3=0}^1\ve{j}(R^1,\nu^2,\nu^3)\dm \nu^2\dm \nu^3,
}
where we have used the periodicity of $\ve{j}$ and that $[\ve{e}_1,\ve{e}_2,\ve{e}_3]$ is the volume of the unit cell. For the final step, we have used the relation that for 3-d unit vectors, we have $\ve{e}^1=\dfrac{\ve{e}_2\times\ve{e}_3}{[\ve{e}_1,\ve{e}_2,\ve{e}_3]}$.
Similarly,
\bml{
    I_2=(\ve{e}_2\otimes\ve{e}^2)\int_{\nu^1=0}^1\int_{\nu^3=0}^1\ve{j}(\nu^1,R^2,\nu^3)\dm \nu^1\dm \nu^3,
    \quad
    I_3=(\ve{e}_3\otimes\ve{e}^3)\int_{\nu^1=0}^1\int_{\nu^2=0}^1\ve{j}(\nu^1,\nu^2,R^3)d\nu^1d\nu^2.
}

We can then rewrite \eqref{relation} as:
\beol{
    \Delta\ve{p}_{T}(\ve{R})=\Delta\ve{p}_{cl}(\ve{R})+\sum_{k=1}^3\int_{t_i}^{t_f}\int_{\nu\in A_k}(\ve{e}_k\otimes\ve{e}^k)\ve{j}_{(k)}\dm S_{\nu}\dm t,
\label{final}
}
where $\ve{j}_{(k)}$ is short hand for $\ve{j}(\ve{\nu})|_{\nu^k=R^k}$ and $A_k$ is the surface where $\nu_k=0$. 
The key point is that the boundary terms above do not generally appear to simplify to $0$.

We present two counter-examples to showcase this, showing different ways in which the microscopic charge can change locally. 
The charge density amplitude could change, or the charge could move across unit cell boundaries, always maintaining charge neutrality. 
In general, it will be a combination of these two mechanisms.
We analyze them separately for clarity: Example \ref{Ex-4.1} considers the amplitude changing while Example \ref{ex2} considers the charge moving to neighbouring unit cells.

\begin{example}\label{Ex-4.1}
    Consider a flowing periodic charge distribution. 
    This gives rise to a current $\ve{j}\colon[0,1)^3\to\Scr{R}^3$, which we consider of the form $\ve{j}(\ve{x},t)=(\dot{a},0,0)\cos(2\pi n_1x_1)$, where $n_1$ is an integer.
    
    Using the continuity equation, we can integrate in time to get an expression for the change in charge density,
    \bmlnn{
        &\ve{j}(\ve{x},t)=(\dot{a},0,0)\cos(2\pi n_1x_1)
        \\
        \implies & -\dot{\rho} = \divergence\ve{j}=-\dot{a}2\pi n_1\sin(2\pi n_1x_1)
        \\
        \implies&\rho(\ve{x},t)-\rho(\ve{x},0)=[a(t)-a(0)]2\pi n_1\sin(2\pi n_1x_1)
    }
    We set  $\rho(\ve{x},0)$ to have  $0$ mean so that the charge distribution is charge-neutral.
    
    The two definitions of polarization in \eqref{final} have the expressions:
    \begin{align}
        \Delta\ve{p}_{T} & = \int_{t_i}^{t_f}\int_{[0,1]^3}\ve{j}\dm x \dm t=\int_{t_i}^{t_f}\int_{[0,1]^3}(\dot{a},0,0)\cos(2\pi n_1x_1)\dm x \dm t=0
        \\
        \Delta\ve{p}_{cl} & = \int_{t_i}^{t_f}\int_{[0,1]^3}\ve{x}\dot{\rho}\dm x \dm t=\int_{t_i}^{t_f}\dot{a}2\pi n_1\dm t\int_{[0,1]^3}\ve{x}\sin(2\pi n_1x_1)\dm x=-\big(a(t_f)-a(t_i)\big)(1,0,0).
    \end{align}
\end{example}

\begin{example}\label{ex2}
    Consider a charge distribution $\rho\colon\Scr{R}\times\Scr{R}_{>0} \to \Scr{R}$ given by $\rho(x,t)=\sin{(2\pi x-ct)}$; this represents a 1-periodic charge flowing at a speed of ${c}/{2\pi}$. We do not specify the choice of the lattice; this translates to no restrictions on the value of $R$ used below.
    
    The current density, macroscopic current, classical polarization, and the polarization rate have the expressions:
    \begin{align}
        \ve{j}(x,t) &= \ve{v}\rho(x,t)=\dfrac{c}{2\pi}\rho(x,t)=\dfrac{c}{2\pi}\sin{(2\pi x-ct)}
        \\
        \ve{J}(R,t) &= \int_{[R,R+1]}\ve{j}(x,t)\dm x=0
        \\
        \ve{p}(R,t) &= \int_{[R,R+1]}(x-R)\rho(x,t)\dm x=-\dfrac{1}{2\pi}\cos{(2\pi R-ct)} \label{4.15}
        \\
        \dot{\ve{p}}(R,t) &= -\dfrac{c}{2\pi}\sin{(2\pi R-ct)}
    \end{align}
    It is clear that $\dot{\ve{p}}(x,t)\neq\ve{J}(x,t)$.
    The polarization changes with time, but the current does not indicate this physics.
    The difference is reflected in the the boundary term in \eqref{relation} which evaluates to $-\ve{j}(R,t)$.

    To see the lattice-dependence of the classical polarization but the lattice-independence of the macroscopic current, consider a unit cell centered at $x=a<1 \implies R\in \Scr{Z}+a$.
    The classical polarization and current have the expressions:
    \begin{align}
        \ve{p}_a(R,t) &= -\dfrac{1}{2\pi}\cos{(2\pi a-ct)}\implies\dot{\ve{p}}_a(R,t)=-\dfrac{c}{2\pi}\sin{(2\pi a-ct)}
        \\
        \ve{J}_a(R,t) &= 0
    \end{align}
    where the subscript denotes the choice of lattice. 
    Notice that $\dot{\ve{p}}_a(R,t)$ depends on the choice of unit cell through $a$, but $\ve{J}_a(R,t)$ does not.
\end{example}

The fact that the two definitions are unequal does not immediately imply that the transport definition is incorrect. 
One could argue that by getting rid of the part of the polarization that causes the non-uniqueness, we may end up with a better definition of polarization. 
It has been shown in [\cite{thouless1983quantization}] that the change in polarization due to charge crossing the boundary of the unit cell is an integral multiple of the polarization quantum $({e\ve{R}}/{|\Omega_{uc}|})$, where $e$ is the charge of an electron and $\ve{R}$ is a lattice vector; this is why they define the change in polarization modulo this quantum. 
However, it is important to highlight noting that [\cite{thouless1983quantization}] uses the independent electron assumption, which significantly restricts its generality; e.g., it is not valid for strongly correlated electronic systems. 
We will show in this section that $\ve{p}_{cl}$ and $\ve{p}_{T}$ are related in a fundamental way without any assumptions on the physical model.

Though the transport definition of polarization was originally posed for infinite crystals, [\cite{martin2004electronic, ortiz1994macroscopic}] have attempted to extend it to finite systems. 
Their approach assumes a unique decomposition of the total charge in the domain into a surface and a bulk part. 
The criterion for defining the bulk of a domain $\Omega$, denoted $\Scr{B}(\Omega)$, is to set it equal to the region with zero macroscopic electric field:
\beol{
    \Scr{B}(\Omega):=\{\ve{x}\in\Omega\vert\ve{E}_{mac}(\ve{x})=\mathbf{0}\}.
}
The polarization distribution in $\Scr{B}(\Omega)$ is calculated as $\ve{p}_{T}$. 
However, [\cite{ortiz1994macroscopic}] have pointed out that $\ve{E}_{mac}\neq0$ for dielectrics. 
The authors state that there is such a unique decomposition for dielectrics and that the polarization should be calculated based on the charge in this bulk. 
This debate is summarized in [\cite{landauer1981pyroelectricity}]. 

\begin{remark}
    Unlike the method outlined in [\cite{ortiz1994macroscopic}], our approach (Section \ref{cl}) does not require the macroscopic electric field to be zero. As we show there, we can calculate the macroscopic electric potential $\Phi_0$ using \eqref{Result}.
\end{remark}

We now present the main result that will be used to connect the two definitions.

\begin{theorem}\label{Thm4.1}
    Given a sequence of ``bulk charge neutral'' charge distributions $\tilde{\rho}_l/l\colon\Scr{R}^3\times \Scr{R}_{>0}\to\Scr{R}$ with bounded charge in each unit cell ($||\tilde{\rho}_l(\hat{\ve{x}}+l\cdot,t)||_{L^2(\Box)}<\infty$) and decaying far away at infinity ($\rho(\ve{x},t) \to 0 \text{ as } |\ve{x}|\to\infty\text{ and } t\in \Scr{R}_{>0}$). Further, given that the current generated by this charge is bounded ($||\ve{j}_l(\cdot,t)||_{L^2(\Scr{R}^3)}<\infty$).
    
    Then, the bound charge is unique (independent of the choice of unit cell) and is related to the current by the relation:
    \beol{\divergence\dot{\ve{p}}=-\dot{\rho}_b=\divergence\ve{j}}
\end{theorem}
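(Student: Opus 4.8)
The plan is to obtain both equalities by combining the homogenization identity $\divergence\ve{p}_0=-\rho_b$ supplied by the polarization theorem \eqref{polthm} with microscopic charge conservation, and to read off uniqueness from the already-established uniqueness of the homogenized potential. The first equality is essentially bookkeeping, the uniqueness claim is inherited from Section \ref{cl}, and the genuine content sits in the second equality $-\dot{\rho}_b=\divergence\ve{j}$, which I would prove by homogenizing the continuity equation.

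First I would dispatch uniqueness. By \eqref{polthm} together with the weak convergence $\Phi_l\weak\Phi_0$ and the uniqueness of $\Phi_0$ (Section \ref{cl}), the bulk quantity $\divergence\ve{p}_0$ is independent of the choice of unit cell; passing to the infinite crystal through the exhaustion $\Omega_h\uparrow\Scr{R}^3$ of Section \ref{PolInfCrys}, and using the decay hypothesis $\rho(\ve{x},t)\to0$ to kill the surface contribution $\ve{p}_{0,h}\cdot\ve{n}_h+\sigma_{p,h}$, extends this to all of $\Scr{R}^3$. Since $\rho_b:=-\divergence\ve{p}_{cl}$, the bound charge is unique. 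The first equality then follows by differentiating $\rho_b(\ve{x},t)=-\divergence\ve{p}(\ve{x},t)$ in time and commuting $\partial_t$ with $\divergence_{\ve{x}}$, which is licensed by the uniform-in-$t$ bound $\|\tilde{\rho}_l(\hat{\ve{x}}+l\,\cdot,t)\|_{L^2(\Box)}<\infty$; this gives $\divergence\dot{\ve{p}}=-\dot{\rho}_b$.

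For the second equality I would homogenize microscopic continuity. Starting from $\partial_t(\tilde{\rho}_l/l)+\divergence\ve{j}_l=0$ and testing against a $\varphi\in\scr{C}_c^\infty(\Omega)$ supported in the interior,
\[
\partial_t\int_{\Omega}\varphi\,\frac{\tilde{\rho}_l}{l}\dm V_{\ve{x}}=-\int_{\Omega}\varphi\,\divergence\ve{j}_l\dm V_{\ve{x}}=\int_{\Omega}\nabla\varphi\cdot\ve{j}_l\dm V_{\ve{x}},
\]
where the boundary term drops because $\varphi$ vanishes near $\partial\Omega$. Passing $l\to0$: applying \eqref{polthm} at each fixed $t$ (the $\sigma$ term is absent for interior $\varphi$) the left side tends to $\partial_t\int_{\Omega}\nabla\varphi\cdot\ve{p}_0\dm V_{\ve{x}}=\int_{\Omega}\nabla\varphi\cdot\dot{\ve{p}}_0\dm V_{\ve{x}}$, while the bound $\|\ve{j}_l(\cdot,t)\|_{L^2(\Scr{R}^3)}<\infty$ lets me extract a weakly convergent subsequence $\ve{j}_l\weak\ve{j}$, whose weak-$L^2$ limit is the cell-averaged macroscopic current, giving $\int_{\Omega}\nabla\varphi\cdot\ve{j}\dm V_{\ve{x}}$ on the right. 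Since $\int_{\Omega}\nabla\varphi\cdot(\dot{\ve{p}}_0-\ve{j})\dm V_{\ve{x}}=0$ for every interior $\varphi$, integrating by parts forces $\divergence(\dot{\ve{p}}_0-\ve{j})=0$ in the interior, i.e. $\divergence\dot{\ve{p}}=\divergence\ve{j}$, which with the first equality completes the chain.

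The main obstacle is the interchange of $\partial_t$ with the $l\to0$ limit and the identification of the weak limit of $\ve{j}_l$: I would need the convergence in \eqref{polthm} to be suitably uniform (or equi-integrable) in $t$ so that the time derivative survives the passage to the limit, and I would need to confirm that the weak-$L^2$ limit of the microscopic current coincides with the macroscopic current entering the transport definition. The structural point that makes the argument clean is the deliberate separation of bulk from surface --- restricting first to interior test functions to isolate the volume identity, and only afterwards reinstating boundary contributions via $\Omega_h\uparrow\Scr{R}^3$ --- since the boundary term in \eqref{relation} is precisely the divergence-free discrepancy between $\dot{\ve{p}}$ and $\ve{j}$ that does \emph{not} feed into $\rho_b$.
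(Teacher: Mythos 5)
Your proposal is correct and follows essentially the same route as the paper: both proofs test the (time-differentiated) charge--current relation weakly against compactly supported functions, invoke the polarization theorem \eqref{polthm} to convert the charge term into $\int\nabla\varphi\cdot\dot{\ve{p}}$, identify the weak $L^2$ limit of $\ve{j}_l$, and deduce $\divergence\dot{\ve{p}}=\divergence\ve{j}_0=-\dot{\rho}_b$ with uniqueness inherited from the uniqueness of the homogenized potential. The only cosmetic difference is that the paper carries $\dot{\Phi}_l$ explicitly through the chain of weak identities (applying \eqref{polthm} directly to $\dot{\tilde{\rho}}_l/l$, which quietly embeds the $\partial_t$--limit interchange you candidly flag as the technical obstacle), whereas you differentiate after passing to the limit and draw uniqueness from the Section \ref{PolInfCrys} exhaustion argument.
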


\begin{proof}
    We start by differentiating the electrostatic Poisson equation with respect to $t$:
    \beol{-\Delta\dot{\Phi}=\dot{\rho}=-\divergence\ve{j},\label{potrhocurr}}
    where $\ve{j}$ is the current density and $\Phi$ is the potential. Using $\ve{E}=-\nabla\Phi$, we can rewrite this as:
    \beol{\divergence\ve{j}=-\divergence\dot{\ve{E}}}
    We now substitute $\rho$ and $\ve{j}$, the sequence of charge distribution described above. 
    The corresponding problem is:
    \bml{-\Delta&\dot{\Phi}_l=\dfrac{\dot{\tilde{\rho}}_l}{l}=-\divergence\ve{j}_l~~\text{in}~\Scr{R}^3 \\
    &\dot{\Phi}_l(\ve{x})=0~~\text{as}~|\ve{x}|\to\infty.}
    
    Multiply the preceding equation with $\varphi\in L^2(\Scr{R}^3)\cap\scr{C}_c(\Scr{R}^3)$ and integrate over the whole domain to get:
    \begin{equation}
        -\int_{\Omega}\Delta\dot{\Phi}_l\varphi \dm V_{\ve{x}}=\int_{\Omega}\varphi\dfrac{\dot{\tilde{\rho}}_l}{l}\dm V_{\ve{x}}=-\int_{\Omega}\varphi\divergence\ve{j}_l\dm V_{\ve{x}}
        \implies
        \int_{\Omega}\nabla\dot{\Phi}_l\cdot\nabla\varphi \dm V_{\ve{x}}=\int_{\Omega}\varphi\dfrac{\dot{\tilde{\rho}}_l}{l}\dm V_{\ve{x}}=\int_{\Omega}\nabla\varphi\cdot\ve{j}_l\dm V_{\ve{x}},\label{premise}
    \end{equation}
    where $\Omega$ is the support of $\varphi$.
    
    Since each term is bounded, we can extract a unique convergent subsequence. Taking the limit of \eqref{premise} and making use of \eqref{polthm}, we obtain,
    \begin{equation}
        \int_{\Omega}\nabla\dot{\Phi}_0\cdot\nabla\varphi \dm V_{\ve{x}}=\int_{\Omega}\nabla\varphi\cdot\dot{\ve{p}}\dm V_{\ve{x}}=\int_{\Omega}\nabla\varphi\cdot\ve{j}_0\dm V_{\ve{x}}
        \implies
        \int_{\Omega}\Delta\dot{\Phi}_0\varphi \dm V_{\ve{x}}=\int_{\Omega}\varphi\divergence\dot{\ve{p}}\dm V_{\ve{x}}=\int_{\Omega}\varphi\divergence\ve{j}_0\dm V_{\ve{x}}\label{result}.
    \end{equation}
    Using the density of the space, we arrive at the desired strong form result:
    \beol{\Delta\dot{\Phi}_0=\divergence\dot{\ve{p}}=\divergence\ve{j}_0.}
    Since $\Phi_0$ is unique, being a weak limit of $\Phi_l$, we have that $\divergence\dot{\ve{p}}$ and $\divergence\ve{j}_0$ are unique and equal to each other. 
    They are both equal to $-\dot{\rho}_b$, the rate of bulk bound charge.

    Thus, even though $\ve{p}$ is not unique for the infinite system, its divergence is unique and can be used to calculate the potential for an infinite system. Moreover, the divergence of the current will give us the rate of change of the potential for the infinite system; they both give the same potential and, hence, the same electric field and energy. Note that this approach fails for finite systems as the surface charge does not come from polarization.
\end{proof}

\begin{remark}
    Note that \eqref{potrhocurr} can be obtained in the electro-quasi-static limit when the magnetic field can be neglected.
    We assume these equations hold true under the adiabatic hypothesis of the Modern Theory of Polarization.
\end{remark}

We have shown that $\divergence\dot{\ve{p}}=\divergence\ve{j}=-\dot{\rho}_b$, implying that the current and the polarization are related via the bulk bound charge. 
Since the electrostatics for infinite domains depend only on this, $\ve{p}_{T}=\int\ve{j}\dm t$ provides a good description of the polarization for infinite domains. 
Because we take the divergence of the polarization to obtain the bound charge, $\ve{j}$ provides a useful measure of the change in polarization. 
From Theorem \ref{Thm4.1}, we should note that it is not clear what the corresponding boundary term would be for finite domains. 
Thus, we connect $\ve{p}_{cl}$ for the infinite crystal to $\ve{p}_{T}$. 
Towards the end of Section \ref{cl}, we have shown that $\ve{p}_{cl}$ for an infinite domain presents a well-defined electrostatic problem. 

\begin{remark}
	\label{ReMark4.3}
    The adiabatic approximation is essential to the transport definition of polarization.
    However, this restricts its applicability, e.g., displacive ferroelectric phase transitions in which there is an abrupt and essentially discontinuous change in the polarization across the transition [\cite{james1986displacive,devonshire1954theory}].
\end{remark}

\section{Energetic Definition of Polarization}\label{W}

In this section, we will show that $\ve{p}_W$ is the same as $\ve{p}_{cl}$. In order to do so, we establish a relation between the variation of the potential and the variation of the total energy. We treat the electrons quantum-mechanically, while we treat the nuclei classically using the Born-Oppenheimer approximation [\cite{parr1980density}].

\subsection{Variational derivative with respect to applied electric field}\label{VariationalDerivative} 
We start by considering the electronic contribution to the variation in energy. Consider a system with $M$ nuclei and $N/M$ electrons per nucleus. The position of these nuclei are $\{\ve{R}_{\alpha}\}_1^M$. 
We will use $\ve{x}_i\in\Scr{R}^3$ to denote the coordinate of the $i$-th electron. 

We consider the Schr\"{o}dinger equation in the presence of an external potential $\phi\colon\Scr{R}^3\to\Scr{R}$:
\begin{equation}
    \TO{H}\ket{\Psi}=W^{tot}_{el}\ket{\Psi}, \quad \text{ subject to } \int_{\Scr{R}^{3N}}|\Psi|^2\dm V_{\ve{x}_1}\ldots \dm V_{\ve{x}_N}=1
\label{Schro}
\end{equation}
where $W^{tot}_{el}$ is the total energy of the system, $\TO{H}$ is the Hamiltonian, and $\Psi\colon\Scr{R}^{3N}\to\Scr{C}$ is the electron wave-function representing the minimum energy configuration of the electrons. $\TO{H}$ has the form:
\beol{
    \TO{H}
    =
    -\dfrac{\hbar^2}{2m}\sum_{i=1}^N\Delta_i
    +\dfrac{e^2}{4\pi}\left[
        \sum_{i,j=1; i<j}^{N}\dfrac{1}{|\ve{x}_i-\ve{x}_j|}
        -\sum_{\alpha=1}^M\sum_{i=1}^N\dfrac{Z_{\alpha}}{|\ve{R}_{\alpha}-\ve{x}_i|}
    \right]
    -\sum_{i=1}^Ne\phi(\ve{x}_i),
}
where $\Delta_i$ is the Laplacian acting on the $i$-th component of $\Psi$; $e$ is the absolute charge of an electron; $m$ is the mass of the electron; and $Z_{\alpha}$ is the atomic number of the $\alpha$-th nucleus located at $\ve{R}_{\alpha}$.
The electronic charge distribution $\rho_e(\ve{x})$ is related to $\Psi$ by:
\beol{\rho_e(\ve{x})=-eN\int_{\Scr{R}^{3(N-1)}}|\Psi(\ve{x},\ve{x}_1,\ldots,\ve{x}_{N-1})|^2\dm V_{\ve{x}_1}\ldots \dm V_{\ve{x}_{N-1}}.}

We consider a variation of the external potential $\phi(\ve{x})\to\phi(\ve{x})+\delta\phi(\ve{x})$. 
This causes a variation in the wavefunction $\ket{\Psi}+\ket{\delta\Psi}$, the energy $W^{tot}_{el}+\delta W^{tot}_{el}$, and the Hamiltonian  $\TO{H}-e\sum\delta\phi(\ve{x}_i)$. Retaining leading order terms:
\begin{align}
    &
    \TO{H}\ket{\Psi}+\TO{H}\ket{\delta\Psi}-e\Big[\sum_1^N\delta\phi(\ve{x}_i)\Big]\ket{\Psi}=W^{tot}_{el}\ket{\Psi}+W^{tot}_{el}\ket{\delta\Psi}+\delta W^{tot}_{el}\ket{\Psi}\label{5.4}
    \\
    &
    \implies \TO{H}\ket{\delta\Psi}-e\Big[\sum_1^N\delta\phi(\ve{x}_i)\Big]\ket{\Psi}=W^{tot}_{el}\ket{\delta\Psi}+\delta W^{tot}_{el}\ket{\Psi}
    \\
    &
    \implies\bracket{\Psi}{\TO{H}}{\delta\Psi}-\bracket{\Psi}{\Big[e\sum_1^N\delta\phi(\ve{x}_i)\Big]}{\Psi}=\bracket{\Psi}{W^{tot}_{el}}{\delta\Psi}+\bracket{\Psi}{\delta W^{tot}_{el}}{\Psi}
    \\
    &
    \implies -\bracket{\Psi}{\Big[e\sum_1^N\delta\phi(\ve{x}_i)\Big]}{\Psi}=\bracket{\Psi}{\delta W^{tot}_{el}}{\Psi}
\end{align}
where we have used \eqref{Schro} to reach the second step, taken the inner-product with $\bra{\Psi}$ to reach the third step, and used \eqref{Schro} in combination with the inner-product with $\bra{\Psi}$ to reach the final step.
If we explicitly write out the inner-products in the final expression above, we have:
\beol{\delta W^{tot}_{el}=\int_{\Scr{R}^3}\delta\phi(\ve{x})\rho_e(\ve{x})\dm V_{\ve{x}}\label{var-rel_el}.}

We next turn to the contribution of the nuclei to the variation in energy. 
We consider the potential due to the $M$ nuclei and their corresponding electron distribution being represented by a potential energy $V\colon\Scr{R}^{3M}\to\Scr{R}$. 
The force due to this potential on the nucleus at $\ve{R}_{\alpha}$ is $-\nabla_{\alpha}V(\ve{R}_1,\ldots,\ve{R}_M)$.
Further, as earlier, $\phi\colon\Scr{R}^3\to\Scr{R}$ is the external potential, with a corresponding force $-eZ_{\alpha}\nabla\phi(\ve{R}_{\alpha})$.
At equilibrium, we require:
\begin{equation}
    \nabla_{\alpha}V(\tilde{\ve{R}})+eZ_{\alpha}\nabla\phi(\ve{R}_{\alpha})=0~~\forall\alpha\in\{1,\ldots,M\}~,\label{MolStat}
\end{equation}
where we use the notation $\tilde{\ve{R}}$ to denote the set of vectors $\{\ve{R}_1,\ldots,\ve{R}_M\}$.

Now we vary the potential $\phi\to\phi+\delta\phi$. 
This will cause a variation in the nuclear positions $\ve{R}_{\alpha}\to\ve{R}_{\alpha}+\delta\ve{R}_{\alpha}$ to remain in equilibrium. 
A Taylor expansion to leading order gives:
\begin{align}
    &
    V(\tilde{\ve{R}}+\delta\tilde{\ve{R}}) 
    = 
    V(\tilde{\ve{R}})+\sum_{\alpha}\nabla_{\alpha}V(\tilde{\ve{R}})\cdot\delta\ve{R}_{\alpha}
    \\
    &
    (\phi+\delta\phi)(\ve{R}_{\alpha}+\delta\ve{R}_{\alpha})
    =
    \phi(\ve{R}_{\alpha})+\nabla\phi(\ve{R}_{\alpha})\cdot\delta\ve{R}_{\alpha} + \delta\phi(\ve{R}_{\alpha})+\nabla\delta\phi(\ve{R}_{\alpha})\cdot\delta\ve{R}_{\alpha}
\end{align}

We use these expressions to compute the variation in the total energy to leading order:
\begin{equation}
    W^{tot}_{n}
    =\sum_{\alpha}eZ_{\alpha}\phi(\ve{R}_{\alpha})+V(\tilde{\ve{R}})
    \implies
    W^{tot}_n+\delta W^{tot}_n
    = \sum_{\alpha}eZ_{\alpha}(\phi+\delta\phi)(\ve{R}_{\alpha}+\delta\ve{R}_{\alpha}) + V(\tilde{\ve{R}}+\delta\tilde{\ve{R}})
\end{equation}
and then use \eqref{MolStat} to compute:
\begin{align}
    \delta W^{tot}_n
    & = \sum_{\alpha}eZ_{\alpha}\big(\nabla\phi(\ve{R}_{\alpha})\cdot\delta\ve{R}_{\alpha}+\delta\phi(\ve{R}_{\alpha})\big)+\sum_{\alpha}\nabla_{\alpha}V(\tilde{\ve{R}})\delta\ve{R}_{\alpha}
    \\
    & = 
    \sum_{\alpha}\underbrace{\left[eZ_{\alpha}\nabla\phi(\ve{R}_{\alpha})+\nabla_{\alpha}V(\tilde{\ve{R}})\right]}_{=0}\cdot\delta\ve{R}_{\alpha} + \sum_{\alpha}eZ_{\alpha}\delta\phi(\ve{R}_{\alpha})
    \\
    & = 
    \sum_{\alpha}eZ_{\alpha}\delta\phi(\ve{R}_{\alpha}) \label{5.13}
\end{align}
Formally writing the nuclear charge distribution in terms of a density $\rho_n$, we can rewrite \eqref{5.13} as:
\beol{
    \delta W^{tot}_n=\int_{\Scr{R}^3}\rho_n(\ve{R})\delta\phi(\ve{R})\dm V_{\ve{R}},\label{var-rel_n}
}

The variation in the total energy is the sum of the individual expressions from \eqref{var-rel_el} and \eqref{var-rel_n}:
\beol{
    \delta W=\int_{\Scr{R}^3}\rho(\ve{x})\delta\phi(\ve{x})\dm V_{\ve{x}}\label{var-rel},
}
where $\rho$ is the net charge distribution.

\begin{remark}
    The energetic definition of polarization is defined as the derivative of the free energy density with respect to the electric field. 
    While the electric field has contributions from the external potential as well as internal charges (nuclei and electrons), because we are considering a variation of the system at equilibrium, only the external potential explicitly appears in the final expression.
\end{remark}

We are now in a position to homogenize \eqref{var-rel} --- i.e., consider a sequence of nucleii and the corresponding electron distributions --- to get the desired relation between the polarization, surface charge, and the variation in the energy.
The result is analogous to the polarization theorem stated in \eqref{polthm}. 
We consider a finite crystalline domain and homogenize \eqref{var-rel}, appropriately scaled, by following a procedure similar to that used in [\cite{PolarizationTheorem}]. 
The scaling is chosen to get a finite, i.e., neither going to infinity nor to zero, polarization in the limit. 
However, unlike the situation in [\cite{PolarizationTheorem}], the charge density cannot be \textit{a priori} be assumed to be compactly supported or to have zero-mean over a unit cell, we have to allow for this more general setting.
A rigorous proof is presented in Section \ref{Polarization from Schrodinger's equations}, while the final result is presented below.
\beol{
    \delta \tilde{W}^{tot}=\lim_{l\to 0}\int_{\Scr{R}^3}\dfrac{\rho(\ve{x})}{l}\delta\phi(\ve{x})\dm V_{\ve{x}}=\int_{\partial\Omega}\delta\phi(\ve{s})\sigma(\ve{s})\dm S_{\ve{s}}+\int_{\Omega}\delta\ve{E}\cdot\ve{p}\dm V_{\ve{x}}.\label{Appendix-final}
}
Interpreting \eqref{Appendix-final}, we have that the variation in the energy has the polarization as energy-conjugate to the electric field in the bulk, and has the surface charge density as energy-conjugate to the potential on the surface.
Thus, the variation requires contribution from both the surface charge and the bulk polarization. 
This aligns with the finding that $(\ve{p},\sigma)$ comes as a pair; both are required to specify the electrostatic fields and energetic response.
Further, from \eqref{Appendix-final}, we see that $\ve{p}_W=\ve{p}_{cl}$.

\subsection{Polarization from the Schrodinger Equation}\label{Polarization from Schrodinger's equations}

We present a derivation of the relation between polarization and the free energy of the system. 
We will start from \eqref{var-rel} to derive our final result in \eqref{Appendix-final} for zero temperature; note that this derivation is not required for the physical interpretation of \eqref{Appendix-final}.

We first write down some notation that we shall use in the derivation present in this section. In order to show that $\ve{p}_W=\ve{p}_{cl}$ in Section \ref{W}, we require assumptions on the charge density that we discuss first.

\begin{definition}[Augmented domain $\Omega_{h}$]
	Given a domain $\Omega\subset\Scr{R}^3$, we define the augmented region 
	\beol{\Omega_h:=\Omega\cup\{\ve{x}\in\Scr{R}^3\vert\exists \ve{s}\in\partial\Omega~\text{such that}~\ve{x}\in[\ve{s},\ve{s}+h\ve{n}(\ve{s})]\},}
	where $\ve{n}$ is the outward facing normal to $\partial\Omega$.
\end{definition}

\begin{definition}[$1$-Concentrated near $\Omega$]
    A sequence of charge distributions is said to be $1$-Concentrated near $\Omega\subset\Scr{R}^3$ if $\exists d>0$ such that
    \beol{ g_1(t)\leq|\rho_l(\ve{s}+lt\ve{n}(\ve{s}))|\leq g_2(t)~\forall t>d,}
    where $g_1,g_2\in L^1(\Scr{R})$.

     Note that the definition of $1$-concentrated is related to a general definition of $l$-concentrated, where we have to establish bounds above and below. 
     For the present case, one may consider $g_1=0$.
\end{definition}

\begin{remark}
	The idea behind the ``augmented domain'' and ``1-Concentrated near $\Omega$'' is that we want to include a region outside $\Omega$ and have a hypothesis on how the charge behaves in this region. The charge not in $\Omega$ must be concentrated near $\partial\Omega$ to identify it as a surface charge.
\end{remark}

\begin{definition}[good boundary]
	Given a domain $\Omega\subset\Scr{R}^3$. We say that the boundary $\partial\Omega$ is ``good'' iff
	\beol{\#(\Omega(\lr))=C\dfrac{|\partial\Omega|}{l^{2}}+o(1)}
	where $\#$ denote the counting measure and $o(1)$ represents a quantity that goes to 0 as $l\to0$.
\end{definition}

We now present the proof of how \eqref{var-rel} leads to \eqref{Appendix-final}. Our material of interest is ionic crystals; thus, we make some simplifying assumptions that we will have charge-neutral unit cells in the bulk of the material.

Consider a domain $\Omega\subset\Scr{R}^3$.
The location of the nuclei is given by $\scr{L}\cap\ve{q}\oplus\Omega$ for some $\ve{q}\in\Scr{R}^3$. 
We want to analyze this system in the presence of an external field $\phi\colon\Scr{R}^3\to\Scr{R}$; the variation in the energy due to a variation in potential is given by \eqref{var-rel}. 
We aim to replace the system of nuclear and electron distributions with a homogenized problem that is close, in some sense, to the actual system. 
To find the homogenized problem, consider the sequence of problems laid out with a scaled lattice $l\scr{L}$, where $l\in(0,1]$ is a scaling parameter. 
As we scale the lattice ($l\to 0$), we have a sequence of charge distributions denoted by $\rho_l\colon\Scr{R}^3\to\Scr{R}$ with $l\in(0,1]$, $\{\rho_l(\cdot+l\ve{y})\}_{(0,1]}\in L^2(\Box)$ and $\{\rho_l\}_{(0,1]}\vert_{\Omega^c}\in L^{1}$. The latter condition follows from the charge being ``1-concentrated'' near $\Omega$. Moreover, we require $\delta\phi\in W^{1,2}(\Omega)\cap\scr{C}^0(\Omega^c)\cap L^{\infty}(\Omega^c)$. In addition to the assumption on $\rho_l$, we shall assume that the boundary, $\partial\Omega$, is ``good''. The definition of $\Box$ is in Section \ref{Notation}.

In order to homogenize, we need to establish the bulk and surface regions; the bulk will give us a well-defined polarization, while the surface will give us a well-defined surface charge density. The bulk is denoted by $\scr{B}_l(\Omega)$ and is defined as
\beol{\scr{B}_l(\Omega):=\left\{\hat{\ve{x}}\oplus l\Box\vert \hat{\ve{x}}\oplus l\Box\subset\Omega \text{ and } \int_{\hat{\ve{x}}\oplus l\Box}\rho_l(\ve{x})\dm V_{\ve{x}}=0\right\}.}

The transition region is denoted by $\scr{T}_l(\Omega)$ and is defined as
\beol{\scr{T}_l(\Omega):=\left\{\hat{\ve{x}}\oplus l\Box\vert \hat{\ve{x}}\oplus l\Box\subset\Omega \text{ and } \int_{\hat{\ve{x}}\oplus l\Box}\rho_l(\ve{x})\dm V_{\ve{x}}\neq0\right\}.}

Since the charge extends outside $\Omega$, we also need to consider this region. We denote this region as $\scr{S}_l(\Omega)$ and define it as
\beol{\scr{S}_l(\Omega):=\left\{\hat{\ve{x}}\oplus l\Box\vert \hat{\ve{x}}\oplus l\Box\cap\partial\Omega\neq\{\}\right\}.}
where the latter represents partial unit cells.

\begin{remark}
    The region $\scr{T}_l$ may be decomposed as 
    \beol{
        \left\{U_{l,\alpha}\subset\scr{T}_l(\Omega)|\int_{U_{l,\alpha}}\rho_l(\ve{x})\dm V_{\ve{x}}=0,~\bigcup_{\alpha}U_{l,\alpha}=\scr{T}_l(\Omega) \text{ and } \forall\alpha,\beta\in\scr{J}_l~~ U_{l,\alpha}\cap U_{l,\beta}=\{\}~\forall l\in\scr{I}\right\},
    \label{Ula}}
    where $\scr{J}_l$ is an arbitrary indexing set which ennumerates the collection $\{U_{l,\alpha}\}_{\alpha\in\scr{J}_l}$ for each $l\in\scr{I}$. Here $\scr{I}\subseteq (0,1]$ denoting a possible subsequence may be extracted if necessary. It may be possible to extract a well-defined polarization from these sets. However, since it is hard to predict the evolution of $U_{l,\alpha}$ with $l$, we do not know if a certain sequence $\{U_{l,\alpha}\}_{\scr{I}}$ may tend to become degenerate in the limit. This is possible if a set $U_{l,\alpha}$ scales differently in some directions than the unit cells in the bulk. The unit cells in the bulk scale with $l$. For the sake of brevity, we will call such a sequence of sets degenerate. We assume that a reasonable way exists to identify a set $U_{l,\alpha}$ along $l\in\scr{I}$.
	
	The problem with degeneracy is that it becomes hard to establish the convergence of gradients. Visually, one can think of the set becoming long and thin as $l\to0$. Thus, along the long direction, the external potential may not be accurately approximated by the affine term in its Taylor series. The linear term (electric field) couples with the polarization, while the higher-order terms couple with higher-order moments; however, we do not want higher-order moments to describe the electronic nature of our material in the current work.  
	
	The collection of sets that are not degenerate along $l\in\scr{I}$ can be combined  with $\scr{B}_l(\Omega)$ to redefine the bulk ($\scr{B}^m_l(\Omega)$). The degenerate sets will be used to redefine the transition region ($\scr{T}^m_l(\Omega)$). The superscript $m$ denotes modified. Note that the decomposition in \eqref{Ula} is not unique. We impose the condition that the decomposition be chosen such that $|\scr{T}^m_l(\Omega)|$ is the smallest possible. 
\end{remark}

\begin{remark}
    We can combine some (if not all) sets in $T_l$ with $S_l$. Thus, we can redefine $S_l$ by adding these degenerate sequences of sets. The modified $S_l(\Omega)$, denoted $S^m_l(\Omega)$, must contain $S_l(\Omega)$ and sets, which will tend to a degenerate shape along one or more directions. 
		
		Notice that these new sets in the modified surface region consist (by definition) of charge-neutral sets. Thus, when we establish convergence to a limit, we will get a potential times the total charge. This will be clear from the proof later. Since the total charge is zero, there is no contribution in the limit. The sets in $T^m_l(\Omega)$ are assumed to vanish in measure and/or in count, being redistributed to $B^m_l(\Omega)$ and $S^m_l(\Omega)$. The assumption made on $T^m_l(\Omega)$ below is a simpler way to achieve this. This is achieved because, as we noted earlier, the decomposition into $U_{l,\alpha}$ is non-unique, so if we assume a set exists such that the $T^m_l(\Omega)$ is minimal possible and vanishes in the limit, we have considerably simplified the analysis. We take the simpler route since we are interested in the least restrictive assumption. We will drop the superscript $m$ if no confusion arises.
\end{remark}

\begin{assumption}
    We assume the ``modified'' regions in $\Omega$ satisfy $|\scr{S}^m_l(\Omega)|<<|\scr{B}^m_l(\Omega)|~\forall l$ with 
    $$\lim_{l\to0}\dfrac{|\scr{S}^m_l(\Omega)|}{|\scr{B}^m_l(\Omega)|}=0,$$
    where the above expresses the fact that $\scr{S}^m_l$ is surface level quantities. In addition, we assume that
\beol{\lim_{l\to0}|\scr{T}^m_l(\Omega)|=0,}
where the above expresses the fact that $\scr{T}^m_l$ does not retain any set of finite measures in the limit. 
\end{assumption}

In the thermodynamic limit, following [\cite{Sen2022}], we let $\delta W^{tot}$ scale with $l$.
In addition, we make the assumption that $\rho_l$ on $\Omega^c$ is 1-Concentrated near $\partial\Omega$. 

Scaling the variation in the total energy using \eqref{var-rel} gives:
\beol{
    \lim_{l\to0}\dfrac{\delta W^{tot}}{l}=\lim_{l\to0}\int_{\Scr{R}^3}\delta\phi\dfrac{\rho_l}{l}\dm V_{\ve{x}}=\lim_{h\to\infty}\lim_{l\to0}\left[\int_{\Omega_{lh}}\delta\phi\dfrac{\rho_l}{l}\dm V_{\ve{x}}+\int_{\Omega_{lh}^c}\delta\phi\dfrac{\rho_l}{l}\dm V_{\ve{x}}\right],
\label{eqn:proof-1}
}
where we first take the limit of $l\to0$ for a fixed $h$.
We get a well-defined surface charge and polarization on $\Omega$ from the first term, while the second term tends to $0$ as $h\uparrow\infty$.  $\Omega_{lh}$ is an augmented domain that helps prove convergence.

We first work on the second term in \eqref{eqn:proof-1}.
Using that $\delta\phi\vert_{\Omega^c}\in \scr{C}^0\cap L^{\infty}$, we get:
\begin{equation}
\begin{split}
    \left\vert\int_{\Omega_{lh}^c}\delta\phi\dfrac{\rho_l}{l}\dm V_{\ve{x}}\right\vert
    & \leq
    \sup_{\Omega^c}\delta\phi(\ve{x})\int_{\partial\Omega\times(h,\infty)}|\rho_l(\ve{s}+lt\ve{n}(\ve{s}))|~|\eta(\ve{s}+lt\ve{n}(\ve{s}))|\dm t\dm S_{\ve{s}}
    \\
    & \leq C\int_{\partial\Omega\times(h,\infty)}|\rho_l(\ve{s}+lt\ve{n}(\ve{s}))| \dm t\dm S_{\ve{s}} 
    \leq C\int_{\partial\Omega\times(h,\infty)}g_2(t)\dm t\dm S_{\ve{s}}
    = G(h),
\end{split}
\label{eqn:proof-2}
\end{equation}
where $\eta\colon\Scr{R}^3\to\{0,1\}$ is used to prevent over-counting when performing the integration; especially when normals intersect. To reach the last line, we have used the integrability of $g_2$, $\eta\leq1$, and $|\partial\Omega|<\infty$.

We next work on the first term from \eqref{eqn:proof-1}:
\bml{
    \int_{\Omega_{lh}}\delta\phi\dfrac{\rho_l}{l}\dm V_{\ve{x}}
    =
    &
    \sum_{\hat{\ve{x}}\in\scr{B}_l(\Omega)}l^3\int_{\Box}\delta\phi(\hat{\ve{x}}+l\ve{y})\dfrac{\rho_l(\hat{\ve{x}}+l\ve{y})}{l}\dm V_{\ve{y}}
    +
    \sum_{\hat{\ve{x}}\in\scr{S}_l(\Omega)}l^3\int_{\lr}\delta\phi(\hat{\ve{x}}+l\ve{y})\dfrac{\rho_l(\hat{\ve{x}}+l\ve{y})}{l}\dm V_{\ve{y}}
    \\
    & + \int_{\partial\Omega\times[0,h]}\delta\phi(\ve{s}+lt\ve{n}(\ve{s}))\rho_l(\ve{s}+lt\ve{n}(\ve{s}))\eta(\ve{s}+lt\ve{n}(\ve{s}))\dm t\dm S_{\ve{s}},
\label{eqn:proof-3}
}
where the first term, referred to as the bulk term, will give the polarization in the bulk; the subsequent two terms will give a well-defined (inner and outer, respectively) surface charge. 
Here $\ve{n}(\ve{s})$ represents the outward normal and $\eta\colon\Scr{R}^3\to\{0,1\}$ prevents over-counting.

We work next on the bulk term from \eqref{eqn:proof-3}:
\bml{
    \sum_{\hat{\ve{x}}\in\scr{B}_l(\Omega)}l^3\int_{\Box}\delta\phi(\hat{\ve{x}}+l\ve{y})\dfrac{\rho_l(\hat{\ve{x}}+l\ve{y})}{l}\dm V_{\ve{y}}
    =
    &
    \sum_{\hat{\ve{x}}\in\scr{B}_l(\Omega)}l^3\delta\varphi(\hat{\ve{x}})\int_{\Box}\dfrac{\rho_l(\hat{\ve{x}}+l\ve{y})}{l}\dm V_{\ve{y}}
    +
    \sum_{\hat{\ve{x}}\in\scr{B}_l(\Omega)}l^3\delta\nabla\varphi(\hat{\ve{x}})\cdot\int_{\Box}\ve{y}\rho_l(\hat{\ve{x}}+l\ve{y})\dm V_{\ve{y}}
    \\
    &
    + 
    \sum_{\hat{\ve{x}}\in\scr{B}_l(\Omega)}l^3\int_{\Box}\underbrace{\dfrac{\delta\phi(\hat{\ve{x}}+l\ve{y})-\delta\phi(\hat{\ve{x}})-l\ve{y}\cdot\nabla\delta\varphi(\hat{\ve{x}})}{l|\ve{y}|}}_{u_l(\hat{\ve{x}},\ve{y})}|\ve{y}|\rho_l(\hat{\ve{x}}+l\ve{y})\dm V_{\ve{y}}
\label{eqn:proof-4}
}
where we have used that $\nabla$ and $\delta$ commute. 
The first term on the right side of \eqref{eqn:proof-4} is zero from bulk charge neutrality.
The third term on the right side of \eqref{eqn:proof-4} goes to zero in the limit from $\delta\phi$ being absolutely differentiable:
\bml{
    \left\vert\sum_{\hat{\ve{x}}\in\scr{B}_l(\Omega)}l^3\int_{\Box}u_l(\hat{\ve{x}},\ve{y})|\ve{y}|\rho_l(\hat{\ve{x}}+l\ve{y})\dm V_{\ve{y}}\right\vert
    &
    \leq
    \sum_{\hat{\ve{x}}\in\scr{B}_l(\Omega)}l^3|u_l(\hat{\ve{x}},\ve{y})|_{L^2(\Box)}\underbrace{|\ve{y}\rho_l(\hat{\ve{x}}+l\ve{y})|_{L^2(\Box)}}_{\leq C}
    \\
    &
    \leq
    \sup_{\hat{\ve{x}}\in\scr{B}_l(\Omega)}|u_l(\hat{\ve{x}},\ve{y})|_{L^2(\Box)}\sum_{\hat{\ve{x}}\in\scr{B}_l(\Omega)}l^3C
    \leq C\sup_{\hat{\ve{x}}\in\scr{B}_l(\Omega)}|u_l(\hat{\ve{x}},\ve{y})|_{L^2(\Box)}
    \overset{l\downarrow0}{\longrightarrow}0,
 }
where the last line follows from the total differentiability of $\delta\phi$ and the sandwich theorem. 

Taking the limit of the second term on the right side of \eqref{eqn:proof-4} gives us:
\beol{
    \int_{\Omega}\nabla\delta\phi\cdot\ve{p}\dm V_{\ve{x}}, \quad \text{ with } \quad \ve{p}(\ve{x}):=\int_{\Box}\ve{y}\rho_0(\ve{x},\ve{y})\dm V_{\ve{y}}
}
and $\rho_0$ is the 2-scale limit of the sequence $\rho_l$.

We now turn to the second term from \eqref{eqn:proof-3} (the inner surface term): 
\begin{equation}
\begin{split}
    & 
    \sum_{\hat{\ve{x}}\in\scr{S}_l(\Omega)}l^3\int_{\lr}\delta\phi(\hat{\ve{x}}+l\ve{y})\dfrac{\rho_l(\hat{\ve{x}}+l\ve{y})}{l}\dm V_{\ve{y}}
    \\
    &
    =
    \underbrace{\sum_{\hat{\ve{x}}\in\scr{S}_l(\Omega)}l^2\delta\phi(\hat{\ve{x}})\int_{\lr}\rho_l(\hat{\ve{x}}+l\ve{y})\dm V_{\ve{y}}}_{\text{Term I}}+\underbrace{\sum_{\hat{\ve{x}}\in\scr{S}_l(\Omega)}l^2\int_{\lr}\underbrace{\big[\delta\phi(\hat{\ve{x}}+l\ve{y})-\delta\phi(\hat{\ve{x}})\big]}_{u_l(\hat{\ve{x}},\ve{y})}\rho_l(\hat{\ve{x}}+l\ve{y})\dm V_{\ve{y}}}_{\text{Term II}}
    \label{eqn:proof-5}
\end{split}
\end{equation}
Applying the Cauchy-Schwartz inequality on the second term from \eqref{eqn:proof-5}, we obtain:
\beol{\text{Term II}\leq\sum_{\hat{\ve{x}}\in\scr{S}_l(\Omega)}l^2\big|u_l(\hat{\ve{x}},\ve{y})\big|_{L^2(\lr)}|\rho_l|_{L^2(\lr)}\leq C\sup_{\scr{S}_l(\Omega)}\big|u_l(\hat{\ve{x}},\ve{y})\big|_{L^2(\lr)}\overset{l\downarrow0}{\longrightarrow}0,}
where we have used the fact that the boundary is ``good'' in order to derive the second inequality. In the first inequality, we have used the fact that either $|\rho_l|_{L^2(\lr)}\leq 1$ or $>1$. If its the former, then we are done as we have a bound; else, we have $|\rho_l|_{L^2(\lr)}\leq|\rho_l|_{L^2(\lr)}^2\leq |\rho_l|^2_{L^2(\Box)}\leq\infty$, since the latter is a measure and we have measures on subsets bounded by measures on super-sets. For both cases, we have obtained an upper bound.

We now focus on the first term from \eqref{eqn:proof-5}, rewriting it as:
\beol{\text{Term I}=\sum_{\hat{\ve{x}}\in\scr{S}_l(\Omega)}|S_l(\hat{\ve{x}})|\delta\phi(\hat{\ve{x}})\underbrace{\int_{\lr}\dfrac{\rho_l(\hat{\ve{x}}+l\ve{y})}{K_l(\hat{\ve{x}})}\dm V_{\ve{y}}}_{\sigma_i(\hat{\ve{x}})},}
where $S_l(\hat{\ve{x}})=|\hat{\ve{x}}\oplus l\Box\cap\partial\Omega|$ and $K_l(\hat{\ve{x}})=\dfrac{S_l(\hat{\ve{x}})}{l^2}$; note that $S_l(\hat{\ve{x}})$ is only defined for $\hat{\ve{x}}\in\scr{S}_l(\Omega)$. In the limit as $l\to0$, the above term becomes a surface integral,
\beol{\int_{\partial\Omega}\delta\phi \sigma_{i,h} \dm S_{\ve{s}}}

We now turn to the third term from \eqref{eqn:proof-3} (the outer surface term): 
\begin{equation}
\begin{split}
    &
    \int_{\partial\Omega\times[0,h]}\delta\phi(\ve{s}+lt\ve{n}(\ve{s}))\rho_l(\ve{s}+lt\ve{n}(\ve{s}))\eta(\ve{s}+lt\ve{n}(\ve{s}))\dm t\dm S_{\ve{s}}
    \\
    &
    =
    \underbrace{\int_{\partial\Omega\times[0,h]}[\delta\phi(\ve{s}+lt\ve{n}(\ve{s}))-\delta\phi(\ve{s})]\rho_l(\ve{s}+lt\ve{n}(\ve{s}))\eta(\ve{s}+lt\ve{n}(\ve{s}))\dm t\dm S_{\ve{s}}}_{\text{Term I}}
    +
    \underbrace{\int_{\partial\Omega}\delta\phi(\ve{s})\int_{[0,h]}\rho_l(\ve{s}+lt\ve{n}(\ve{s}))\eta(\ve{s}+lt\ve{n}(\ve{s}))\dm t\dm S_{\ve{s}}}_{\text{Term II}}
\label{eqn:proof-6}
\end{split}
\end{equation}

We consider Term I from \eqref{eqn:proof-6}:
\beol{
    |\text{Term I}|
    \leq 
    \sup_{\Omega^c}\rho_l\int_{\partial\Omega\times[0,h]}~|\eta(\ve{s}+lt\ve{n}(\ve{s}))|~|\delta\phi(\ve{s}+lt\ve{n}(\ve{s}))-\delta\phi(\ve{s})|\dm t\dm S_{\ve{s}}
    \overset{l\downarrow0}{\longrightarrow}0,
}
where the term in the integral goes to $0$ from the continuity of $\delta\phi$.

We consider Term II from \eqref{eqn:proof-6}:
\beol{
    \text{Term II}
    =
    \int_{\partial\Omega}\delta\phi(\ve{s})\underbrace{\int_{[0,h]}\rho_l(\ve{s}+lt\ve{n}(\ve{s}))\eta(\ve{s}+lt\ve{n}(\ve{s}))\dm t}_{\sigma_{o,h}(\ve{s})}\dm S_{\ve{s}}
    =
    \int_{\partial\Omega}\delta\phi(\ve{s})\sigma_{o,h}(\ve{s})\dm S_{\ve{s}}.
}

Now that we have a limit for each term from \eqref{eqn:proof-3}, we can take the limit as $h\to\infty$. 
Using that $\lim_{h\to\infty}G(h)=0$, $\lim_{h\to\infty}\sigma_{o,h}=\sigma_o$ and $\lim_{h\to\infty}\sigma_{i,h}=\sigma_i$, we obtain:
\beol{
    \lim_{l\to0}\dfrac{\delta W^{tot}}{l}
    =
    \delta\tilde{W}^{tot}=\int_{\partial\Omega}\delta\phi(\ve{s})\sigma(\ve{s})\dm S_{\ve{s}}
    +
    \int_{\Omega}\delta\ve{E}\cdot\ve{p}\dm V_{\ve{x}},
}
where $\delta\ve{E}=\nabla\delta\phi$ and $\sigma=\sigma_i+\sigma_o$.

\section{Concluding Remarks}\label{Summary}

In this work, we showed that the three definitions of polarization, namely the classical, the transport, and the energetic definitions of polarization, are equivalent under certain conditions. 
We show that the classical definition of polarization encompasses the transport and the energetic definitions of polarization. 

In Section \ref{cl}, we analyzed the classical definition of polarization as the dipole moment per unit volume over the periodic unit cell.
Owing to the non-uniqueness in the choice of the unit cell, we have non-unique values of dipole moment per unit cell. Thus, the polarization density, a macroscopic description of the electronic nature of the system, depends on the choice of the unit cell in a periodic system.
We showed that different choices of the unit cell in the bulk give rise to different choices of partial unit cells on the surface. The unit cells in the bulk are charge neutral and give rise to a polarization density, while the partial unit cells on the surface are not charge neutral, giving rise to a surface charge density. 
For every choice of unit cell, we have both the polarization density in the bulk as well as the surface charge density on the surface. 
Accounting for both leads to unique potentials and total energies of the system.

In Section \ref{Berry}, we analyze the transport definition of polarization that posits that the change in polarization is the current in the system. 
We showed that the transport definition is related to the classical definition of polarization in that, for infinite systems, the current due to the change in polarization gives rise to the same potential in the system. 
Given that there are assumptions on the transport definition of polarization that are not required for the classical definition of polarization, the classical definition of polarization can be said to encompass the transport definition. 
Further, the transport definition of polarization relies on the adiabatic change or Berry phase associated with any evolving wave function. This concept is manifested as a path integral in Fourier space, which computes a quantized phase related to the Chern number \textbf{[}\cite{vanderbilt_2018}\textbf{]}. From our work reported in this paper, we can conclude that this quantization of the phase introduces a topological aspect to the problem, distinct from the topological aspect of polarization apparent from this approach. Specifically, surface and bulk behaviors exhibit separate characteristics.

Finally, in Section \ref{W}, we analyze the energetic definition of polarization as the energy-conjugate of the electric field. 
Since the variation of the free energy with the electric field is seemingly unique, the polarization so obtained must be unique and independent of the choice of unit cell. 
We showed that analogous to the choice of the bulk and surface unit cells in the classical definition of polarization, the energetic definition of polarization has for every bulk-free energy density a corresponding surface-free energy density; both these contributions are required to obtain a unique total energy.

These conclusions suggest that the classical definition of polarization is the most useful and general.
It is very efficient to compute numerically since it requires only a simple integration over the unit cell.
Further, it allows for treating physical theories at various scales with the same conceptual description.
For instance, in comparison to the transport definition, it does not require the calculation of the quantum mechanical wavefunction, which is an important advantage for the common electronic structure method of Density Functional Theory (DFT).
Despite the calculation of objects that are called wavefunctions in DFT, these are not true wavefunctions since DFT is not truly an \textit{ab initio} electronic structure method.
The basis of DFT is in the electronic charge density, which can be readily used in the classical definition of polarization.
In comparison to the energetic definition, the classical definition is model-independent in that its basis is in the multipole expansion of the fundamental electrostatic equations.
The energetic definition, on the other hand, relies in an essential way on the physical model that is chosen to represent the energy of the system.
That is, given a charge distribution, the corresponding polarization will depend on the physical model that is used to describe the system.

\section*{}

\paragraph*{Competing Interest Statement.}

The authors have no competing interests to declare.

\paragraph*{Acknowledgments.}

We thank Gautam Iyer and Pradeep Sharma for useful discussions;
AFRL for hosting visits by Kaushik Dayal;
and NSF (DMS 2108784), BSF (2018183), and AFOSR (MURI FA9550-18-1-0095) for financial support.
This article draws from the doctoral dissertation of Shoham Sen at Carnegie Mellon University [\cite{Sen2022}].
Shoham Sen was partially supported by a Vannevar Bush Faculty Fellowship (PI: R. D. James) at the University of Minnesota while writing this paper.
	
\appendix

\makeatletter
\renewcommand*{\thesection}{\Alph{section}}
\renewcommand*{\thesubsection}{\thesection.\arabic{subsection}}
\renewcommand*{\p@subsection}{}
\renewcommand*{\thesubsubsection}{\thesubsection.\arabic{subsubsection}}
\renewcommand*{\p@subsubsection}{}
\makeatother

\section{Admissible Set of Unit Cells}\label{Admissible set of unit Cells}

To define a unit cell, we must also define a corner map that goes along with it. We choose ${\Box}^{\prime}\subset\Scr{R}^3$ and $\ve{f}\in\Scr{R}^3$ such that they satisfy 
\beol{
    \Scr{R}^3
    =
    \bigcup_{\ve{i}\in l\scr{L}}\ve{i}\oplus(l\ve{f}\oplus l{\Box}^{\prime}) \text{ and } \forall\ve{i}\neq\ve{j}\in l\scr{L},
    \quad \ve{i}\oplus(l\ve{f}\oplus l{\Box}^{\prime})\cap\ve{j}\oplus(l\ve{f}\oplus l{\Box}^{\prime})=\{\},
\label{2.4}
}
where condition \eqref{2.4} guarantees a pair $(\ve{f},\Box^{\prime})$ which when repeated along the lattice $\scr{L}$ tessellates $\Scr{R}^3$.

The scaled unit cell is defined as $l\Box:=l\ve{f}\oplus l{\Box}^{\prime}$ and the corner map is defined as
\beol{
    \hat{\ve{x}}_l
    :=
    \ve{i}+l\ve{f} \text{ if } \ve{x}\in\Omega \text{ and } \exists\ve{i}\in l\scr{L} \text{ such that } \ve{x}\in\ve{i}\oplus l\Box,
}
where the subscript $l$ in the corner map denotes the scaling of the lattice. We have dropped the subscript $l$ because the scaling is clear from the context.

A scaled partial cell of $l\scr{L}$, denoted $l\lr$, is defined as a scaled unit cell not contained entirely in the domain.
\begin{equation}
    l\lr
    :=
    \left\{\ve{z}\in l\Box\big\vert\exists \ve{i}\in l\scr{L} \text{ such that } \ve{i}+ \ve{z}\in\Omega \text{ but } \ve{i}\oplus l\Box\cap\Omega^c\neq\{\}\right\}.
\end{equation}
The definition of a unit cell and partial unit cell can be obtained from their scaled counterparts by re-scaling each by $l^{-1}$.

\section{Bulk Charge Neutrality}\label{Bulk Charge Neutrality}

\begin{definition}[bulk charge neutrality]
	A sequence of charge distributions $\{\rho_l\}\colon\Scr{R}^3\to\Scr{R}$ is said to be bulk charge neutral on $\Omega(\Box)$, if $\forall \varphi\in L^2(\Scr{R}^3)$, we have
	\beol{\lim_{l\to0}\int_{\Omega(\Box)}\varphi(\ve{x})q_l(\ve{x})\dm V_{\ve{x}}=0,\label{2.7}}
	where the above must hold true independent of the choice of the lattice $l\scr{L}$. Given a choice of the lattice $\scr{L}$, $q_l$ is given by
	\beol{q_l(\ve{x}):=\int_{\Box}\rho_l(\hat{\ve{x}}+l\ve{y})\dm V_{\ve{y}}.\label{2.8}}
	$q_l$ has to be well-defined for all choices of the lattice.
\end{definition}

\begin{remark}
    If our charge distribution varies over space, it is not always possible to ensure that each unit cell will be charge neutral for an arbitrary choice of the lattice.

    A term of the form \eqref{2.7} appears in the derivation of our polarization theorem, and ``bulk charge neutrality'' is a physically-motivated assumptions that gets rid of it. 
    If bulk charge neutrality did not hold, $q_l$ would model the free charge. 
    For modeling ideal dielectrics without free charges, bulk charge neutrality is a reasonable assumption.
\end{remark}


\end{document}